\newtheorem{thm}{Theorem}
\newtheorem{cor}{Corollary}
\newtheorem{lem}{Lemma}
\begin{document}
\title{Treating Interference as Noise in Cellular Networks: A Stochastic Geometry Approach}
\author{\IEEEauthorblockN{Mudasar Bacha, Marco Di Renzo and Bruno Clerckx}\thanks{M. Bacha and B. Clerckx is with the Communication and Signal Processing Group, Department
of Electrical and Electronic Engineering, Imperial College London, London SW7 2AZ, U.K. (e-mail: m.bacha13@imperial.ac.uk; b.clerckx@imperial.ac.uk). }
\thanks{Marco Di Renzo is with the Laboratoire des Signaux et Systemes, CNRS, CentraleSupelec, Univ Paris-Sud, Universite Paris-Saclay, 91192 Gif-sur-Yvette Cedex, France (e-mail: marco.direnzo@l2s.centralesupelec.fr).}
\thanks{This work is partially supported by the UK EPSRC under grant EP/N015312/1.}
}
%
\maketitle
\begin{abstract}
The interference management technique that treats interference as noise (TIN) is optimal when the interference is sufficiently low. Scheduling algorithms based on the TIN optimality condition have recently been proposed, e.g., for application to device-to-device communications. TIN, however, has never been applied to cellular networks. In this work, we propose a scheduling algorithm for application to cellular networks that is based on the TIN optimality condition. In the proposed scheduling algorithm, each base station (BS) first randomly selects a user equipment (UE) in its coverage region, and then checks the TIN optimality conditions. If the latter conditions are not fulfilled, the BS is turned off. In order to assess the performance of TIN applied to cellular networks, we introduce an analytical framework with the aid of stochastic geometry theory. We develop, in particular, tractable expressions of the signal-to-interference-and-noise ratio (SINR) coverage probability and average rate of cellular networks. In addition, we carry out asymptotic analysis to find the optimal system parameters that maximize the SINR coverage probability. By using the optimized system parameters, it is shown that TIN applied to cellular networks yields significant gains in terms of SINR coverage probability and average rate. Specifically, the numerical results show that average rate gains of the order of $21\%$ over conventional scheduling algorithms are obtained.
\end{abstract}
\begin{IEEEkeywords}
Treating interference as noise (TIN), cellular networks, user scheduling, Poisson point processes, stochastic geometry, system level analysis.
\end{IEEEkeywords}
\section{Introduction}
Interference has always been one of the main limiting factors in cellular networks due to its indeterministic nature. 
In order to cope with interference, different solutions have been proposed. For example, coordinated multipoint (CoMP) \cite{CoMP_magzine} and intercell interference coordination (ICIC) \cite{eICIC_ABS} have been considered for the long term evolution-advanced (LTE-A) communications standards. In CoMP operation, multiple base stations (BSs) cooperate over a backhaul link and jointly transmit data to the cell-edge user equipments (UEs) in order to mitigate the intercell interference and hence improve the network throughput \cite{CoMP_IEEE_proc_comn}. The ICIC blanking of macrocells has been proposed for application to heterogeneous networks in order  to reduce the amount of interference to the UEs of small cells \cite{on_off_muting}. Fractional power control (FPC) is considered to be an  essential feature in the uplink (UL) of the  LTE and LTE-A communication standards \cite{LTE_FPC, FPC_VTC}. FPC, however, still generates high levels of interference and limits the UL performance. Another approach is interference aware fractional power control (IAFPC), which limits the maximum amount of interference that a UE can generate. If the interference generated is greater than a maximum threshold, the UE adjusts its transmit power so as to reduce the interference to a given maximum value \cite{FPC_2}.
These few examples confirm the relevance that mitigating interference in cellular networks plays, as well as the research efforts that have been put towards this end. All these approaches, however, are heuristic in nature and do not rely upon any information theoretic optimality conditions.

Treating interference as noise (TIN) is a known interference management technique that is optimal if the strength of the intended link is greater than or equal to the interference strength that the intended link receives multiplied by the interference strength that the intended link creates \cite{optimality_TIN}. It is not just the optimality of TIN that makes it attractive to the research community, but also its low complexity and robustness to channel uncertainty \cite{optimality_TIN,TIN_optimality1,TIN_optimality2,TIN_simplicity1}. Despite its optimality and simplicity, TIN has never been applied to cellular networks. In the present paper, we investigate the application of TIN to cellular networks and introduce a tractable analytical framework in order to optimize its parameters and quantify its achievable performance. To this end, the mathematical tools of stochastic geometry and Poisson point processes are employed.
\subsection{Related Work}
A spectrum sharing mechanism for application in device-to-device (D2D) communications, which is referred to as ITLinQ, has been proposed in \cite{Itlinq}. ITLinQ is based on the TIN optimality condition and it has been shown to provide performance gains compared with other heuristic spectrum sharing algorithms. The authors of \cite{Itlinq} provide a distributed version of ITLinQ, which guarantees some fairness among the links. ITLinQ has further been improved in \cite{itlinq_plus}, where ITLinQ+ has been introduced. By using stochastic geometry, a semi-analytical framework for analyzing ITLinQ has been introduced in \cite{SG_itlinq}. Therein, some adjustable parameters are considered, which can be optimized to get high gain over other D2D scheduling algorithms.

An analytical framework to study the performance of UL heterogeneous cellular networks that employ IAFPC has been proposed in \cite{marco_interference_aware}. According to the IAFPC scheme, the UEs that generate higher interference than a maximum threshold limit their transmit power so that the interference is below an admissible maximum value. A similar approach that turns off the UEs that generate more interference than a maximum threshold has been studied in \cite{marco_muting}. In both \cite{marco_interference_aware} and \cite{marco_muting}, stochastic geometry tools have been used for performance analysis and optimization. Gains in terms of average rate and power consumption have been shown.
Another stochastic geometry based framework that studies the problem of BS cooperation in the downlink for heterogeneous networks can be found in \cite{coordinated_joint_trans}. The analysis of BS cooperation with retransmissions can be found in \cite{spatiotemporal_cooperation}.
Simulations based studies that consider interference aware power control can be found in \cite{simulation_pc1,simulation_pc2,marco_UL}. 

The existing works that employ the TIN optimality condition for interference management are limited to D2D communications \cite{Itlinq,itlinq_plus,SG_itlinq}, whereas the works that employ interference awareness methods in cellular networks are based on heuristic criteria \cite{marco_interference_aware, FPC_2, marco_muting,coordinated_joint_trans,spatiotemporal_cooperation, simulation_pc1,simulation_pc2,marco_UL}. In the present paper, we propose a scheduling algorithm based on the TIN optimality condition and develop a tractable analytical framework for its analysis and optimization by using tools from stochastic geometry. In the last few years, stochastic geometry has emerged as a power tool for the analysis of wireless networks due to its analytical tractability yet accuracy. For example, it has been used for the analysis of cellular networks \cite{dude_mudasar,Geff_DL,Geff_K_tier_DL,MGF_marco,martin_RatelessCodes}, cognitive networks \cite{SG_TWC_martin},  millimeter wave cellular networks \cite{marco_milimeter,heath_millimeter_wave}, ad-hoc networks \cite{geff_adhoc}, wireless powered cellular networks \cite{marco_wpt,marco_EE}, and backscatter communication networks \cite{backscatter_mudasar}.











\subsection{Contributions and Outcomes}
Designing a cellular network based on the TIN optimality conditions and developing an analytical framework for its analysis and optimization is a challenging task. To this end, in fact, a centralized controller that keeps track of all the channels of the cellular network and that identifies the strongest interference received and the strongest interference created on each link of the cellular network is needed. The complexity of such a centralized scheduler may be too high for application to large-scale networks. The resulting scheduling algorithm, in addition, would be intractable from the analytical standpoint and, therefore, difficult to optimize without using extensive system-level simulations.

To deal with the issues of implementation complexity and analytical intractability, we propose and study the performance of a simplified two-step version of the (optimal or centralized) TIN-based scheduling algorithm, which can be implemented in a distributed manner and requires only the channel state information (CSI) of neighboring BSs. In the first step, each BS randomly selects a UE in its coverage region, similar to conventional cellular networks that do not employ TIN. In the second step, only the BSs that fulfill the TIN optimality conditions schedule for transmission the UE in their coverage region. The rest of the BSs, on the other hand, are turned off. In order to make the TIN scheduling algorithm suitable for application to cellular networks, we introduce two design parameters ($M$ and $\mu$) that are optimized in order to control the number of BSs that are turned off. The optimization of these two parameters is important in order to identify a suitable trade-off between the potential reduction of interference and the potential loss of average rate that turning some BSs off entails. In spite of the latter potential loss of average rate, our analysis shows that the proposed TIN-based scheduling algorithm outperforms conventional cellular networks in terms of average rate, thanks to its effective interference reduction capability.

The proposed two-step TIN-based scheduling algorithm is simple to implement in cellular networks. Its analysis and optimization are, however, still challenging. The main reason is the lack of analytical results for the distribution of the downlink distances within the typical cell of a Voronoi tessellation \cite{martin_RatelessCodes,marco_EE}. To overcome this limitation, we introduce some approximations that lead to a tractable analytical framework, which is shown to be suitable for system optimization yet sufficiently accurate. We approximate, in particular, the typical cell of a Voronoi tessellation with the so-called Crofton cell \cite{marco_EE}, and the distribution of the distance between a BS and its most interfered UE with the distribution of the distance between the typical UE and the BS that creates the highest interference to it. These two approximations result in a lower bound for the coverage probability of the proposed two-step TIN-based scheduling algorithm.



The unique contributions and outcomes of the present work can be summarized as follows:
\begin{itemize}
\item{} We derive tractable analytical expressions for the SINR coverage probability and average rate of the proposed two-step (distributed) TIN scheduling protocol. For specific system parameters, the analytical formulas are proved to reduce to the SINR coverage probability and average rate of conventional cellular networks.
\item{} We show that  unique values of $M$ and $\mu$ that maximize the SINR coverage probability and average rate exist. In order to compute such optimal $M$ and $\mu$, we carry out asymptotic analysis of the SINR coverage probability and provide a simple optimization algorithm. By setting $M=1$, more precisely, we identify the optimal value of $\mu$ for high and low values of the SINR decoding threshold.
\item{} We observe that the optimal value of $\mu$ decreases as the SINR threshold increases. This implies that, in order to achieve a high SINR decoding threshold, more BSs need to be turned off. If, on the other hand, the SINR threshold is small, we show that no BSs need to be turned off, which implies that optimal performance can be obtained by using the conventional scheduling algorithm.
\item{} We further show that TIN-based scheduling algorithms with optimized parameters significantly improve the SINR coverage probability and average rate.
For example, the exact  implementation (through simulation without any approximation)  of the two-step TIN-based scheduling  improves the SINR coverage probability by $67\%$ and the mathematically tractable implementation (lower bound)  of the two-step TIN-based scheduling improves the SINR coverage probability by $36\%$, if the SINR decoding threshold is set to $10$ dB.
In addition, the corresponding increase of the average rate is $21\%$ and $11\%$, respectively, despite the fact that some BSs are turned off, compared to the classical scheduling algorithm.
\end{itemize}

The rest of the present paper is organized as follows. In Section \ref{system_model}, we introduce the network model and the TIN-based scheduling algorithm. In Section \ref{SINR_SE_analysis}, we provide a tractable analytical framework of the SINR coverage probability and average rate. In Section \ref{asymptotic_analysis}, we carry out asymptotic analysis in order to find the optimal system parameters that optimize the SINR coverage probability. In Section \ref{results}, simulation and numerical results are presented. Finally,  Section \ref{conclusion} concludes the paper and summarizes some ideas for future research.
\section{System Model}
\label{system_model}
\subsection{Network Model}
We consider a single-tier downlink cellular network in which the locations of BSs and UEs are modeled as points of two bi-dimensional and mutually independent homogeneous Poisson point processes (PPPs). We denote by $\Phi_b$ and $\Phi_u$ the PPPs, and by $\lambda_b$ and $\lambda_u$ the densities of BSs and UEs, respectively. The density of UEs is assumed to be much greater than the density of BSs. Thus, all the BSs are active and have UEs to serve in every resource block (carrier frequency, time slot, etc.), if no scheduling for interference management is applied. We assume full frequency reuse, i.e., all the BSs share the same transmission bandwidth. Each UE is associated with the nearest BS. Accordingly, the coverage regions of the BSs constitute a Poisson-Voronoi tessellation in the plane. The BSs and UEs have a single antenna. The standard unbounded path-loss model with path-loss exponent $\alpha>2$ is considered. The fast-fading is assumed to follow a Rayleigh distribution. More general system models may be analyzed. In the present work, however, we consider the so-called standard modeling assumptions \cite{marco_EE}, in order to focus our attention on the impact and potential benefits of TIN.
\subsection{Treating Interference as Noise}
Treating interference as noise is a scheduling algorithm that has attracted major interest for practical and theoretical reasons. From the implementation point of view, TIN is attractive due to its simplicity and adaptability to channel uncertainties. From the theoretical standpoint, it is a promising solution due to its optimality under certain conditions \cite{optimality_TIN}.

The TIN optimality conditions, more precisely, can be stated as follows: In a wireless network with $n$ transmitter and receiver pairs, if the strength of the intended link (from an intended BS to an intended UE) is greater than or equal to the product of the strengths of the strongest interference that the intended BS creates and of the strongest interference that the intended UE receives, then TIN achieves the whole capacity region to within a constant gap of $\log3n$ \cite{optimality_TIN}. In mathematical terms, the TIN optimality conditions can be formulated as follows:
\begin{equation}
\mathrm{{{SNR}}}_{i}\geqslant\max_{j\neq i}\mathrm{\mathrm{{{INR}}}}_{ij}\max_{k\neq i}\mathrm{\mathrm{{{INR}}}}_{ki}\quad\forall i=1,...,n,
\label{TIN_original}
\end{equation}
where $\mathrm{{{SNR}}}_{i}$  and $\mathrm{{{INR}}}_{ij}$ denote the signal-to-noise ratio (SNR) of link $i$ and the interference-to-noise ratio (INR) of source $j$ at destination $i$, respectively. It is worth mentioning, in particular, that the SNRs and INRs in \eqref{TIN_original} depend on both the path-loss and the fast fading.
\subsection{TIN-Based Scheduling Algorithm}
The scheduling algorithm based on the TIN optimality conditions in \eqref{TIN_original} may be difficult to realize in large-scale cellular networks, since a centralized controller that is aware of the (instantaneous) channel state information of all the links available in the networks is necessary. In the present paper, this approach is referred to as \textit{centralized TIN scheduling}.

To overcome this issue, we propose a two-step distributed scheduling algorithm inspired by \eqref{TIN_original}, which, in addition, does not necessitate the instantaneous channel state information of the links. This approach is referred to as \textit{two-step or distributed TIN scheduling}. In the first step, for each available resource block, each BS randomly selects a UE that lies in its coverage region. In the second step, the BSs that do not fulfill the following (simplified) TIN optimality conditions:
\begin{equation}
M\mathrm{{\overline{SNR}}}_{i}^{\mu}\geqslant\max_{j\neq i}\mathrm{\mathrm{{\overline{INR}}}}_{ij}\max_{k\neq i}\mathrm{\mathrm{{\overline{INR}}}}_{ki}\quad\forall i=1,...,n,
\label{TIN_basic}
\end{equation}
are turned off, where $\mathrm{{{\overline{SNR}}}}_{i}$  and $\mathrm{{{\overline{INR}}}}_{ij}$ have the same meaning as in \eqref{TIN_original} except that they are averaged with respect to the fast fading in order to dispense the scheduler from requiring the instantaneous channel state information of the links. Equation \eqref{TIN_basic}, as opposed to \eqref{TIN_original}, is applied by each BS independently of the other BSs, which makes it suitable for a distributed implementation and requires only the average channel state information of neighboring BSs. In \eqref{TIN_basic}, in addition, we have introduced two design parameters, $M$ and $\mu$, for system optimization, which allow us to control the number of BSs that are turned off, and, thus, are instrumental for interference management. In particular, $M$ is a positive real number greater than or equal to one ($M \ge 1$), and $\mu$ is a positive real number greater than or equal to one and less than or equal to two ($1 \le \mu \le 2$).

In order to get deeper understanding and insight from \eqref{TIN_basic}, we rewrite it with the aid of a more explicit notation by taking as an example the cellular network realization depicted in Fig. \ref{TIN_diagram}. For any realization of the cellular network under analysis, we select one cell (BS) at random that is referred to as \textit{the typical cell or the typical BS}. Among the UEs that lay in its coverage region, we select one UE at random for every available resource block. We focus our attention on a randomly chosen resource block and the UE that is scheduled for transmission on it is referred to as the \textit{typical UE}. Let $X_{11}$ denote the downlink distance between the typical UE (UE1 in Fig. \ref{TIN_diagram}) and the typical BS (BS1 in Fig. \ref{TIN_diagram}), $X_{12}$ denote the downlink distance between the typical BS and the most interfered UE (UE2 in Fig. \ref{TIN_diagram}) by it, and $X_{21}$ denote the downlink distance between the typical UE and the BS (BS2 in Fig. \ref{TIN_diagram}) that creates the highest interference to it. With the aid of this explicit notation, \eqref{TIN_basic} can be re-formulated as follows:
\begin{equation}
M\left(\frac{PX_{11}^{-\alpha}}{N}\right)^{\mu}\geq\frac{PX_{12}^{-\alpha}}{N}\frac{PX_{21}^{-\alpha}}{N}
\label{TIN_cell_basic},
\end{equation}
where $P$ is the transmit power of the BSs, $N$ is the noise power at the UE, and $\alpha$ is the path-loss exponent. From \eqref{TIN_cell_basic}, it follows that the typical BS necessitates only three average SNRs to check the TIN optimality conditions, and, thus, a network-level controller is not necessary.

Equivalently, \eqref{TIN_cell_basic} can be written as follows:
\begin{equation}
X_{11}\leq M^{\frac{1}{\alpha\mu}}\left(\frac{N}{P}\right)^{\frac{2-\mu}{\alpha\mu}}\left(X_{12}X_{21}\right)^{\frac{1}{\mu}}
\label{TIN_cell}.
\end{equation}
\begin{figure}
	\centering
		\includegraphics[scale=0.7]{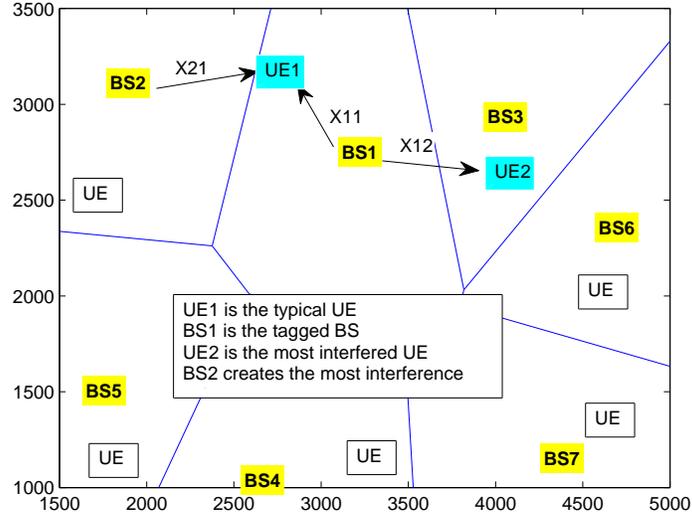}
	\caption{Network model.}
	\label{TIN_diagram}
\end{figure}
During the second step of the proposed TIN-based scheduling algorithm, the typical UE is scheduled for transmission if $X_{11}$ fulfills the constraint in \eqref{TIN_cell}. Otherwise, the typical BS is turned off and the UEs that lay in its coverage region are not scheduled for transmission. This implies that the interference can be potentially reduced at the cost of reducing the average rate and the fairness among the UEs, since some of them are not scheduled for transmission at a given time instance. In the present paper, we employ the pair of parameters $M$ and $\mu$ to find the good balance in order to reduce the interference while increasing both the SINR coverage probability and the average rate. We do not explicitly study, on the other hand, the issue of fairness among the UEs. This is postponed to future research. It is important to mention, however, that the fairness among the UEs may not be a critical issue in a well designed system. If in a given time instance, e.g., a time-slot, the TIN optimality condition in \eqref{TIN_cell} is not fulfilled, it may be likely fulfilled in another (the next) time instance. The local interference conditions that are considered in \eqref{TIN_cell} may, in fact, change at every time instance for two main reasons. i) In each resource block, the BSs select the UEs at random. Accordingly, the interference perceived by the typical UE may change every time that a random UE is chosen by the BSs. ii) Mobility and shadowing, which are not explicitly considered in the present work, may change the interference perceived by the typical UE as well. The system parameters $M$ and $\mu$, in addition, may be tuned in order to account for specific fairness requirements. By setting, for example, $M$ to a very large value, the scheduling criterion in \eqref{TIN_cell} is deactivated and the typical fairness requirements among the UEs can be guaranteed. In the sequel, it is shown, more precisely, that our system model reduces to the original cellular network model without TIN-based scheduling by setting $M=1$ and $\mu=2$. Due to space limitations, the analysis and optimization of $M$ and $\mu$ in order to find a good trade-off between average rate and user fairness is postponed to future research.
%
\begin{figure*}
\centering
			\begin{subfigure}{0.45\textwidth}
			\centering
			\includegraphics[width=1\linewidth]{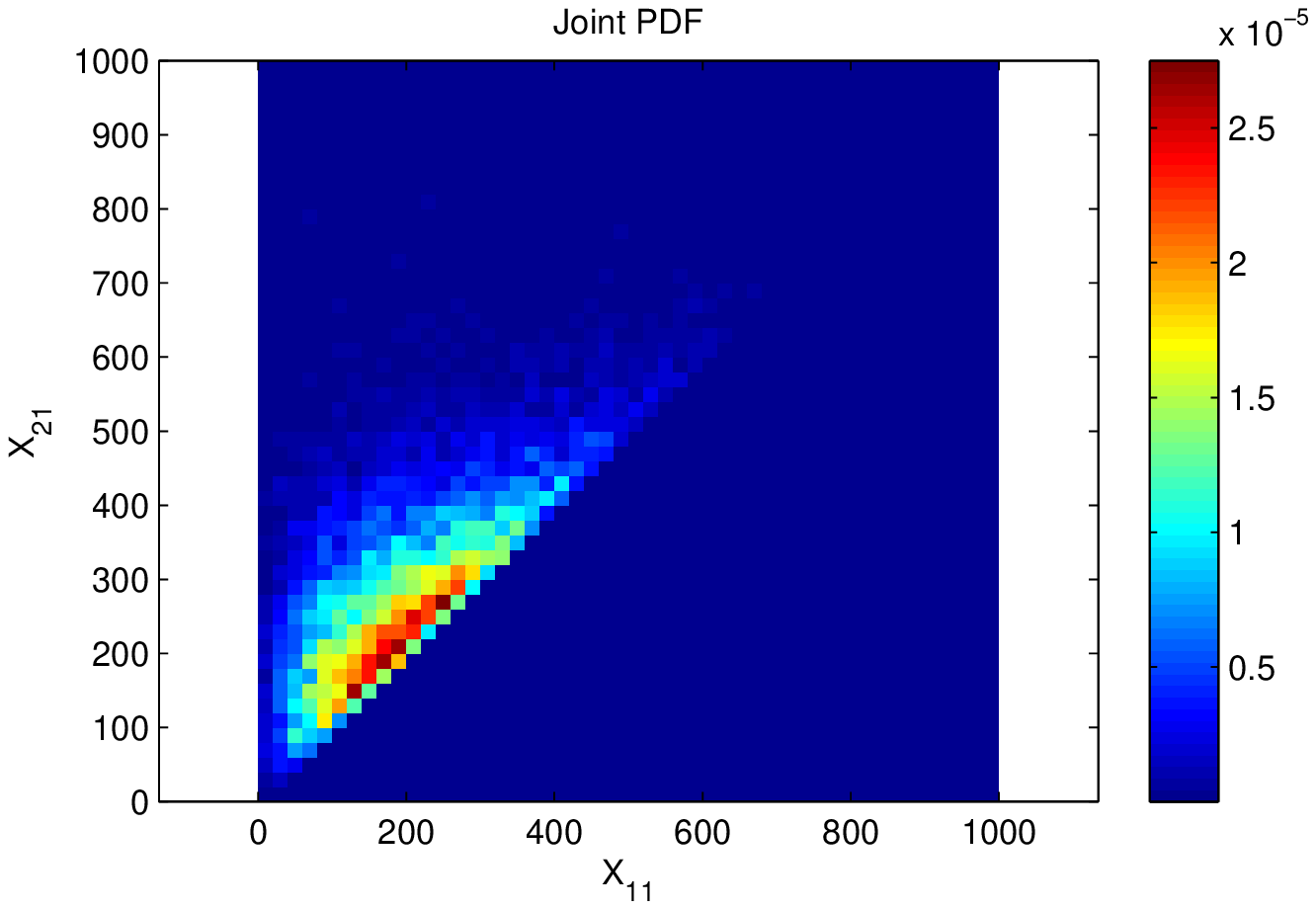}
			\caption{Conventional cellular network.}
			\label{classical_x11_x21}
		\end{subfigure}
	\begin{subfigure}{0.45\textwidth}
		\centering
		\includegraphics[width=1\linewidth]{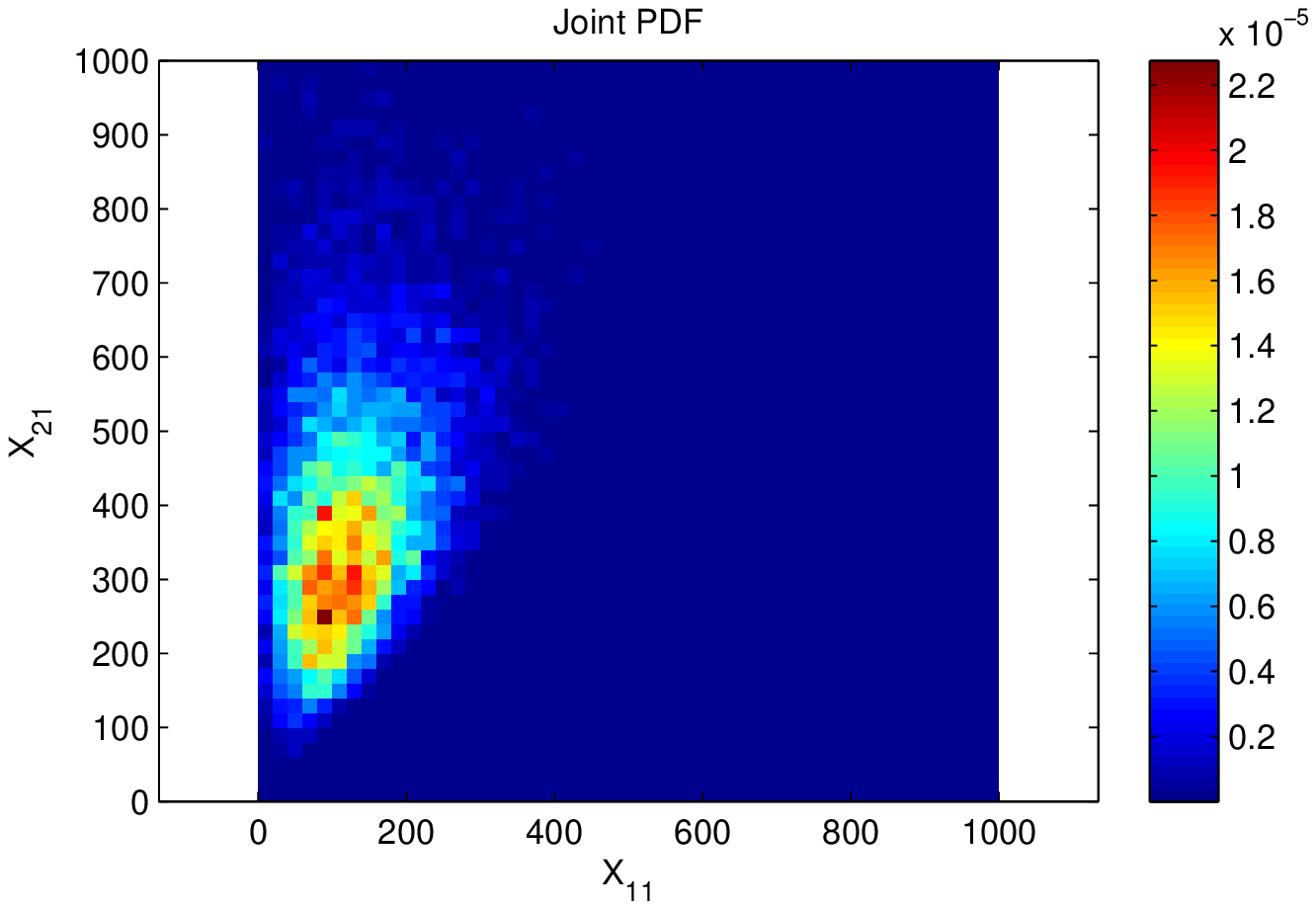}
		 \caption{TIN-based cellular network.}
		 \label{TIN_x11_x21}
	\end{subfigure}\hfill
\centering
			\begin{subfigure}{0.45\textwidth}
			\centering
			\includegraphics[width=1\linewidth]{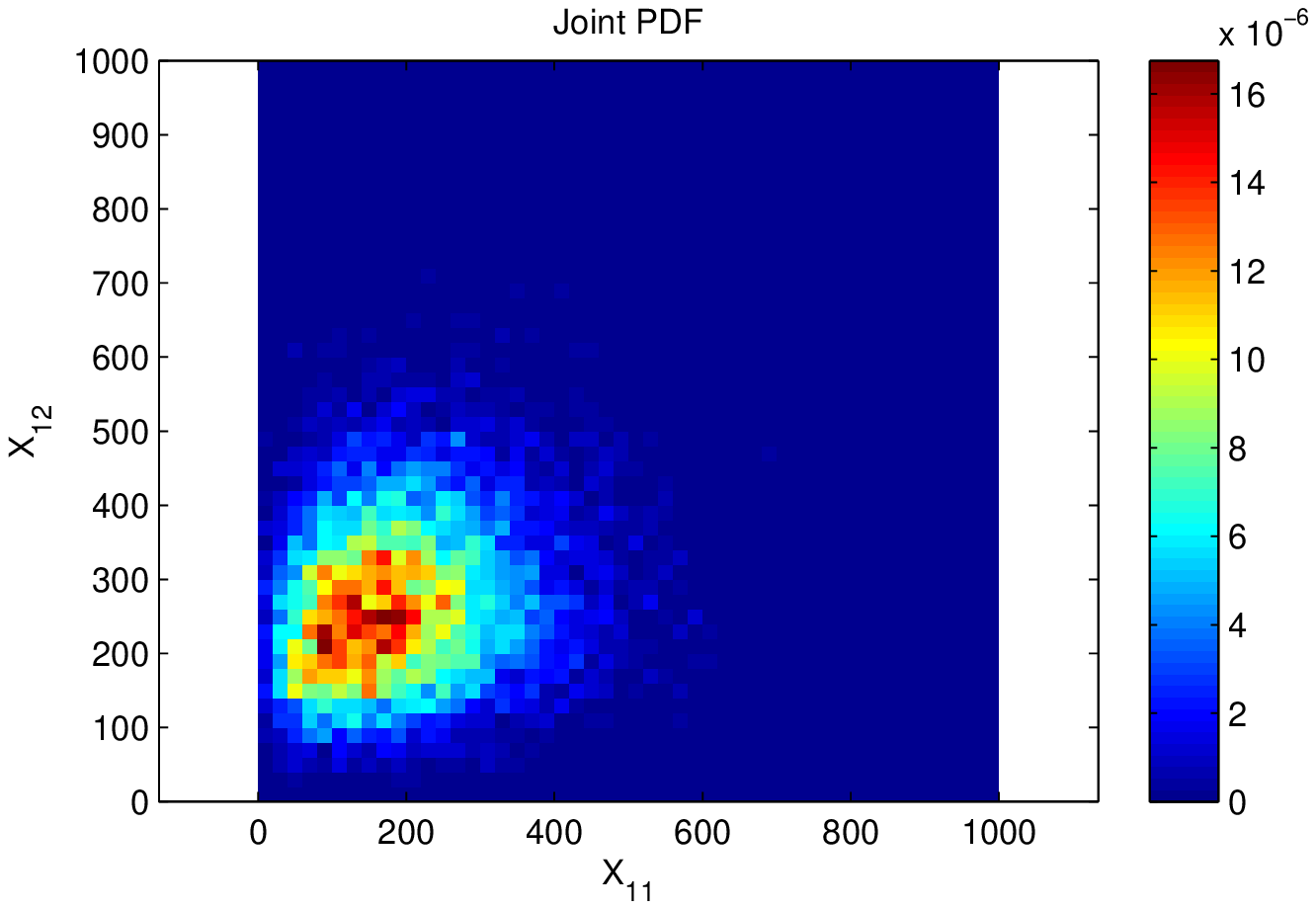}
			\caption{Conventional cellular network.}
			\label{classical_x11_x12}
		\end{subfigure}
	\begin{subfigure}{0.45\textwidth}
		\centering
		\includegraphics[width=1\linewidth]{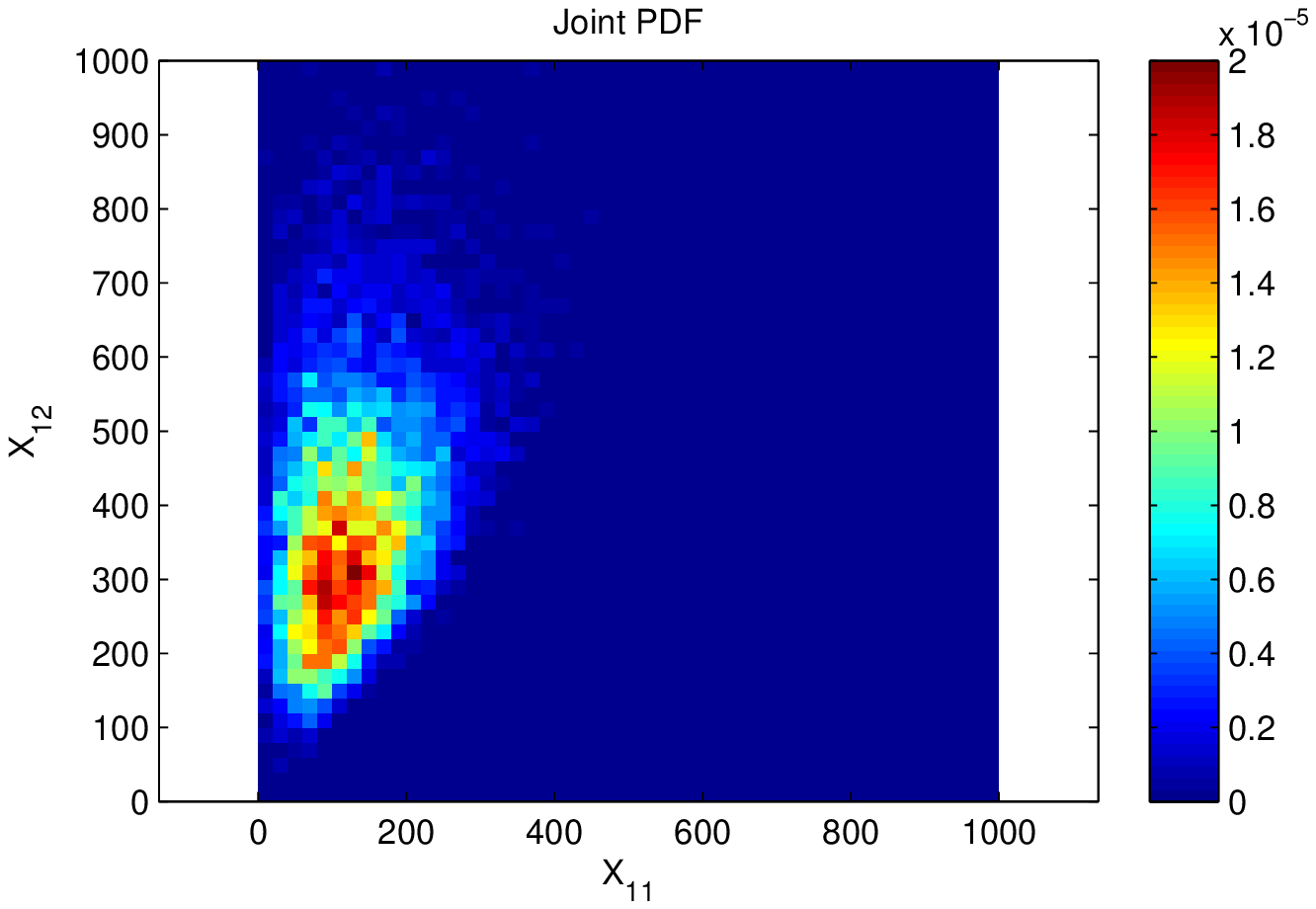}
		 \caption{TIN-based cellular network.}
		 \label{TIN_x11_x12}
	\end{subfigure}\hfill
\caption{Top view of the joint probability density functions of $(X_{11}, X_{21})$ and $(X_{11}, X_{12})$ in conventional and TIN-based cellular networks ($\lambda_b=5$, $M=1$, $\mu=1.8$).}
		\label{empirical_joint_pdfs}
\end{figure*}


To better understand the impact of the proposed TIN-based scheduling algorithm on the downlink distances in \eqref{TIN_cell}, we analyze the joint probability density function of the distance pairs $(X_{11}, X_{21})$ and $(X_{11}, X_{12})$. We compare, in particular, a cellular network where TIN is not applied (i.e., each BS selects a UE at random in its coverage region, which is always scheduled for transmission) and the same cellular network where the scheduling criterion in \eqref{TIN_cell} is applied. From Fig. \ref{classical_x11_x21}, if TIN is not applied, we evince that $X_{21}$ is always greater than $X_{11}$, but the most interfering BS may be located just outside the coverage region of the typical BS. If TIN is applied, on the other hand, Fig. \ref{TIN_x11_x21} shows that, as opposed to Fig. \ref{classical_x11_x21}, the most interfering BS is located further away from the coverage boundary of the typical BS. This confirms that TIN is capable of reducing the interference at the typical UE. From Fig. \ref{classical_x11_x12}, if TIN is not applied, we evince that $X_{12}$ may be greater or less than $X_{11}$. This implies that the most interfered UE may be located farther or closer than the typical UE. If TIN is applied, on the other hand, Fig. \ref{TIN_x11_x12} shows that the most interfered UE is located farther than the typical UE. This highlights that TIN is capable of reducing the interference towards the UEs.

In summary, the TIN-based scheduling algorithm is capable of reducing the interference in cellular networks by turning off those BSs that create a high level of interference, as well as those BSs whose tagged UEs receive a high level of interference. The proposed TIN-based scheduling algorithm, in particular, is different from those reported in \cite{marco_interference_aware} and \cite{marco_muting}, where the rationale is to compare the signal strengths against a maximum but fixed level of tolerable interference. In the proposed TIN-based scheduling algorithm, the level of tolerable interference depends on the signal strengths themselves, which makes TIN robust to channel uncertainty as well. Also, the proposed TIN-based scheduling not only accounts for the amount of interference that is received but also for the amount of interference that is generated.

\section{SINR Coverage Probability and Average Rate}
\label{SINR_SE_analysis}
In this section, we introduce analytical frameworks for computing the SINR coverage probability and average rate of TIN-based cellular networks. The obtained analytical frameworks are instrumental to quantify the performance gains offered by TIN, to compare conventional against TIN-based cellular networks, as well as to optimize the system parameters in order to identify the correct tradeoff between interference reduction and the required average rate.

To this end, some enabling results are needed. In particular, the probability that a random UE satisfies the TIN optimality conditions in \eqref{TIN_cell}, and the joint and marginal distributions of the distances $X_{11}$, $X_{12}$, and $X_{21}$ are needed. However, they are not available in the open technical literature. To overcome this analytical challenge, we resort to some approximations that are introduced, motivated, and discussed in the following section for ease of exposition, since they are applied throughout the rest of the present paper.

\subsection{Approximations for Tractable Analytical Modeling} \label{section_approximation}
Three main approximations are used for our analysis, which are detailed as follows.
\begin{itemize}
  \item In our network model, the distribution of the downlink distances within the typical Poisson-Voronoi cell are needed. It is known, however, that these distributions are unknown \cite{martin_RatelessCodes,marco_EE}. A tractable and accurate approximation that is typically employed to overcome this limitation consists of approximating the typical cell with the Crofton (or zero) cell. The Crofton cell of a Poisson-Voronoi tessellation is the cell that contains the origin. It is known that the Crofton cell is larger than the typical cell, but the two cells are equal in law \cite{MDR3}. Some discussions accompanied by empirical results are available in \cite{MDR4}. To obtain a tractable yet accurate analytical framework of the SINR coverage probability and average rate, we conduct the analysis for the Crofton cell instead of for the typical cell. This approach is motivated by the fact that the marginal and joint distributions of $X_{11}$ and $X_{21}$ are available in closed-form for the Crofton cell of a Poisson-Voroni tessellation \cite{distance_distribution}. Since it is known that large cells are more likely to contain the origin, the Crofton cell is larger than the typical cell defined through the Palm measure \cite{MDR6}. This implies that, with the proposed approximation, the distances $X_{11}$ and $X_{21}$ are overestimated.





%
%
%
  \item In our network model, the distribution of the downlink distance, $X_{12}$, between the typical BS and its most interfered UE is needed. To the best of our knowledge, the distribution of this distance is not available in closed-form. In order to overcome this limitation, we still rely on the Crofton cell approximation, and, in addition, we propose to approximate the distribution of $X_{12}$ with the distribution of $X_{21}$, i.e., with the distance between the typical UE and its most interfering BS. This approximation is empirically supported by comparing Fig. \ref{TIN_x11_x21} and Fig. \ref{TIN_x11_x12}, where it is shown that, if TIN is applied, the conditional probability density functions of $X_{12}$ and $X_{21}$ are similar. In Fig. \ref{empirical_cdfs}, in addition, we depict the corresponding cumulative distribution functions of $X_{12}$ and $X_{21}$. We observe that they are not so different from each other, especially for short distances. Furthermore, it is apparent that $X_{21}$ overestimates $X_{12}$.
  \item By applying the TIN-based scheduling algorithm, some BSs are turned off if the TIN optimality conditions in \eqref{TIN_cell_basic} are not fulfilled. Even though the point process of BSs is a PPP, the point process of the active BSs after applying TIN is not a PPP anymore. The TIN-based scheduling algorithm, in fact, introduces some spatial correlations among the set of active BSs that depend on the amount of downlink interference generated and received throughout the entire cellular network. To the best of our knowledge, no exact analytical characterization of the point process of the active BSs exists in the open technical literature. For analytical tractability, and similar to \cite{marco_interference_aware,marco_muting,dude_mudasar}, we approximate the point process of the active BSs with an inhomogeneous PPP. The spatial inhomogeneity is, in particular, determined by the spatial constraints imposed by the TIN optimality conditions in \eqref{TIN_cell_basic}, which allows us to account, at least in part, for the spatial correlations among the active BSs. The details of the approximating inhomogeneous PPP are provided in the sequel.
\end{itemize}

Based on the Crofton cell approximation, the joint probability density function of $X_{11}$ and $X_{21}$ is approximated as follows \cite{distance_distribution}:
\begin{equation}
f_{X_{11},X_{21}}(x_{11},x_{21}) \approx  \left(2\pi\lambda_b\right)^2 x_{11}x_{21} \mathrm{e}^{-\pi\lambda_b x_{21}^2},
\label{joint_pdf_dist}
\end{equation}
if $x_{11} < x_{21}$ and $f_{X_{11},X_{21}}(x_{11},x_{21}) =0$ otherwise. Also, we have $f_{X_{11}}(x_{11}) \approx  2\pi\lambda_b x_{11} \mathrm{e}^{-\pi\lambda_b x_{11}^2}$ and $f_{X_{11},X_{21}}(x_{11},x_{21}) \approx f_{X_{11}}(x_{11}) f_{X_{21}|X_{11}}(x_{21}|x_{11})$, where $f_{X_{21}|X_{11}}(x_{21}|x_{11})$ is the conditional probability density function of $X_{21}$. Based on the second approximation, furthermore, we assume $f_{X_{12}}(x_{12}) \approx f_{X_{21}}(x_{21})$. By capitalizing on the Crofton cell approximation, in addition, the typical UE can be assumed to be at the origin without loss of generality.

Based on the inhomogeneous PPP approximation, the point process of the interfering BSs after applying the TIN optimality conditions is approximated with an inhomogeneous PPP of distance-dependent density $\lambda_{I}(r)$ defined as follows:
\begin{equation}
\begin{cases}
\lambda_{I}(r)= 0 & \quad \text{if} \quad r < \mathcal{R_I}  \\
\lambda_{I}(r)=\lambda_{b}\mathbb{P}\left[\mathbb{A_{UE}}\right] & \quad \text{if} \quad r \ge \mathcal{R_I},
\label{den_interf_BS}
\end{cases}
\end{equation}
where $\mathbb{A_{UE}}$ denotes the event that \eqref{TIN_cell_basic} is true, $\mathbb{P}\left[\mathbb{A_{UE}}\right]$ is its probability of occurrence, and $\mathcal{R_I}$ constitutes the smallest distance from the origin of any interfering BSs after applying TIN. $\mathcal{R_I}$ is referred to as the \textit{inhomogeneity ball} and can be formulated as follows:
\begin{equation}
\mathcal{R_I} = \max(X_{11},X_{11}^{\frac{\mu}{2}}\left(\frac{P}{N}\right)^{\frac{2-\mu}{2\alpha}}\left(\frac{1}{M}\right)^{\frac{1}{2\alpha}}),
 \label{interf_boundary}
\end{equation}
where $\max\left(\cdot,\cdot\right)$ denotes the maximum operator, and its first and second arguments originate from the shortest distance cell association criterion and from the TIN-based optimality conditions in \eqref{TIN_cell_basic}, respectively. From \eqref{interf_boundary}, the condition $X_{21}\geq \mathcal{R_I}$ holds implicitly true.

The accuracy of the proposed approximations is analyzed in Section \ref{results} with the aid of Monte Carlo simulations. It is shown that the proposed approximations lead to a tractable yet accurate analytical formulation of the SINR coverage probability and average rate. In the sequel, the proposed approximations are used for all the analytical derivations.

\begin{figure*}
\centering
			\begin{subfigure}{0.45\textwidth}
			\centering
			\includegraphics[width=1\linewidth]{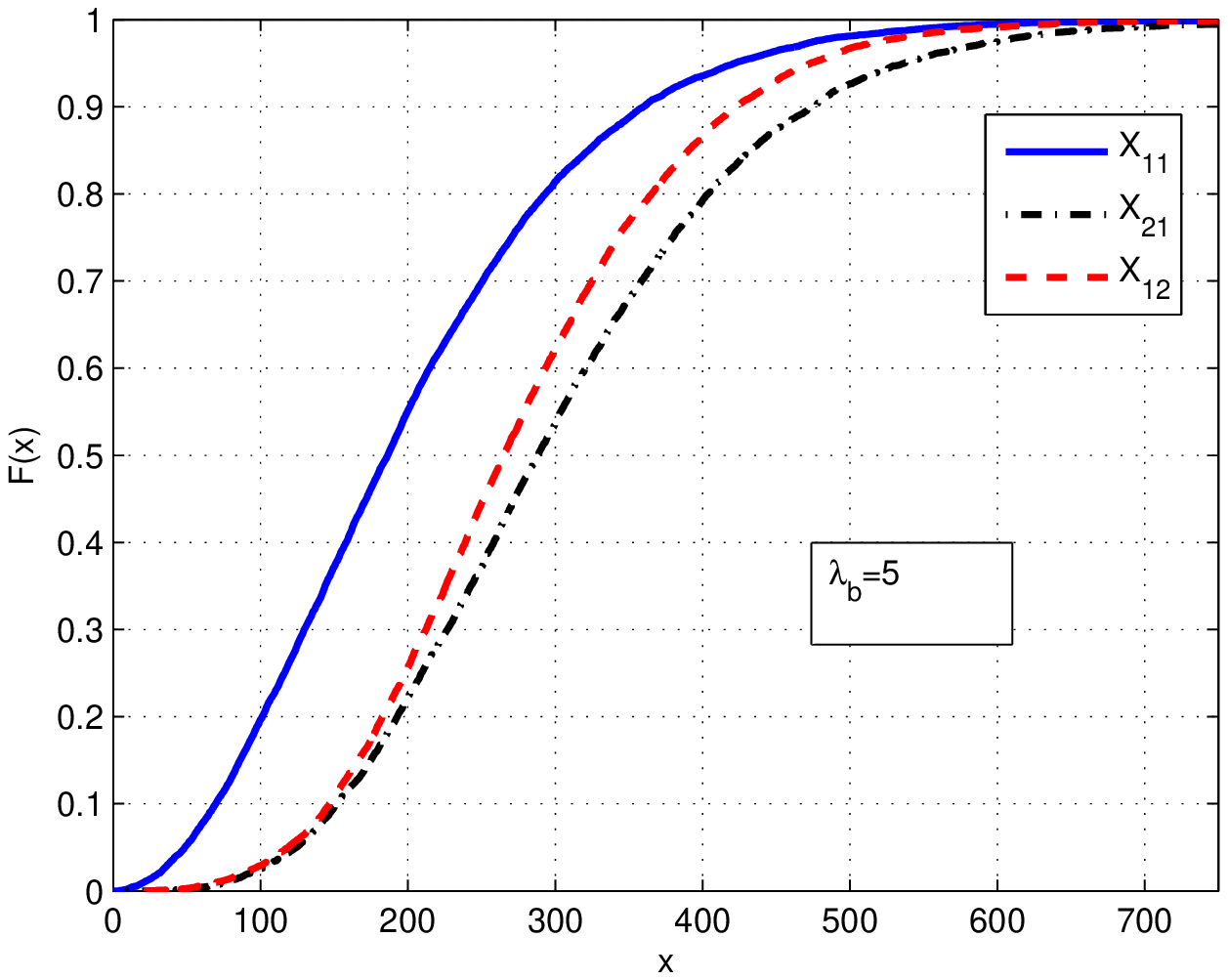}
			\caption{}
			\label{}
		\end{subfigure}
	\begin{subfigure}{0.45\textwidth}
		\centering
		\includegraphics[width=1\linewidth]{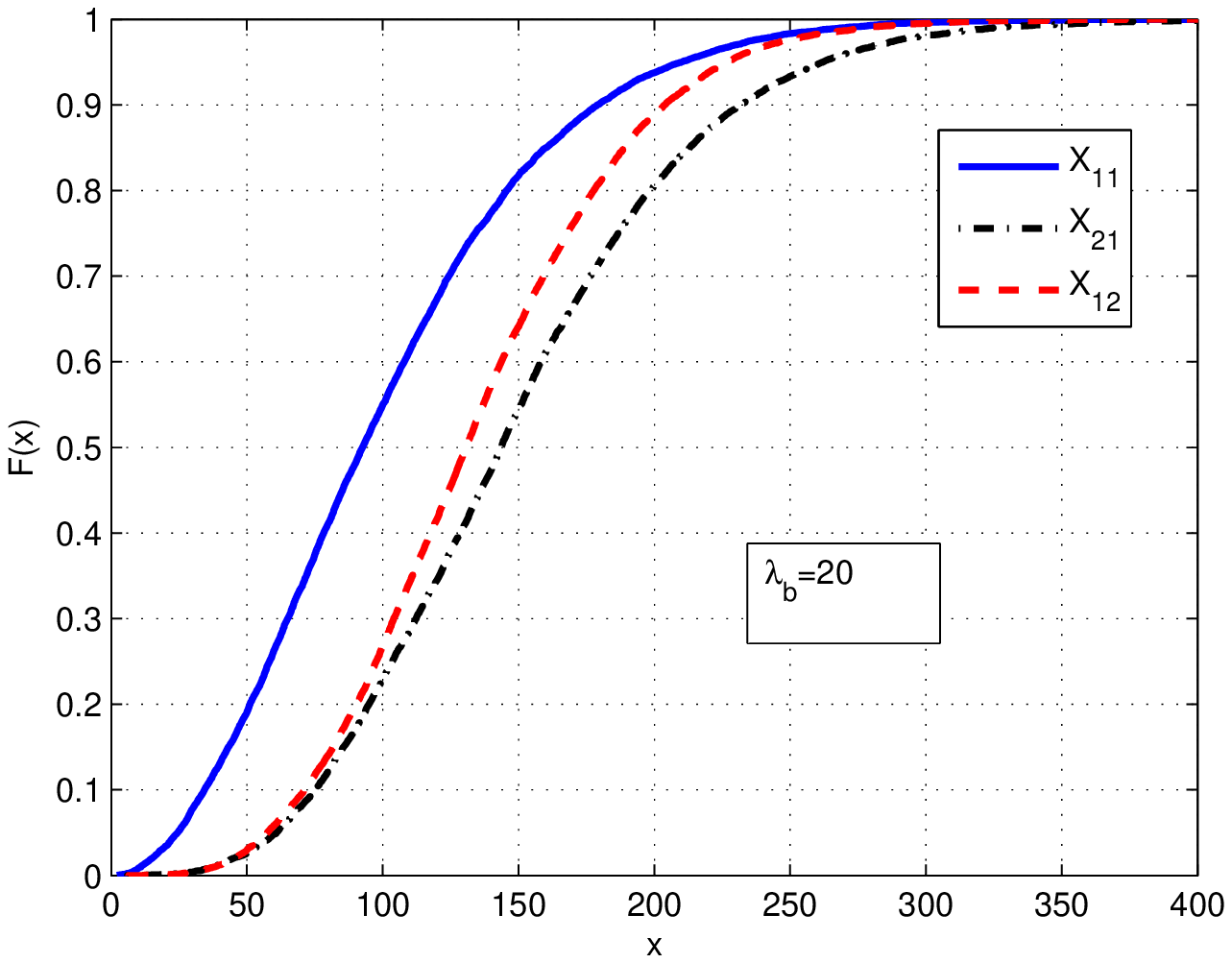}
		 \caption{}
		 \label{}
	\end{subfigure}\hfill
\caption{Empirical cumulative distribution functions of $X_{11}, X_{21}$ and $X_{12}$}
		\label{empirical_cdfs}
\end{figure*}

\subsection{Probability of TIN}
In this section, we compute the probability that a randomly selected UE satisfies the TIN optimality conditions in \eqref{TIN_cell}. Based on the approximations in Section \ref{section_approximation}, the event, $\mathbb{A_{UE}}$, that the typical UE fulfills the TIN optimality conditions can be formulated as follows:
\begin{equation}
\mathbb{A_{UE}}=\left[X_{11}\leq M^{\frac{1}{\alpha\mu}}\left(\frac{N}{P}\right)^{\frac{2-\mu}{\alpha\mu}}X_{21}^{\frac{2}{\mu}}\right]
\label{TIN_cell_received}.
\end{equation}

The probability that the typical UE fulfills the event $\mathbb{A_{UE}}$ is given in the following lemma.
\begin{lem}
The probability that the typical UE is active can be formulated as follows:
\begin{equation}
\begin{split}
\mathbb{P\left[A_{UE}\right]}&=\mathbb{P}\left[X_{11}\leq M^{\frac{1}{\alpha\mu}}\left(\frac{N}{P}\right)^{\frac{2-\mu}{\alpha\mu}}X_{21}^{\frac{2}{\mu}}\right]\\
&= \int_{0}^{\infty}2\left(\pi\lambda_{b}\right)^2 x_{21}\mathrm{e}^{-\pi\lambda_{b}x_{21}^{2}} {\min{} ^2} \left(x_{21},M^{\frac{1}{\alpha\mu}}\left(\frac{N}{P}\right)^{\frac{2-\mu}{\alpha\mu}}x_{21}^{\frac{2}{\mu}}\right)\mathrm{d}x_{21},
\label{TIN_prob_received}
\end{split}
\end{equation}
where $\min\left(\cdot, \cdot\right)$ denotes the minimum function.
\label{prob_TIN_lem}
\end{lem}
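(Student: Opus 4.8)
The plan is to compute $\mathbb{P}[\mathbb{A_{UE}}]$ directly by integrating the (approximate) joint probability density function of $(X_{11},X_{21})$ from \eqref{joint_pdf_dist} over the region carved out by the TIN optimality condition in \eqref{TIN_cell_received}. For compactness I would first introduce the shorthand $\beta = M^{\frac{1}{\alpha\mu}}\left(\frac{N}{P}\right)^{\frac{2-\mu}{\alpha\mu}}$, so that the event becomes simply $\{X_{11}\le \beta X_{21}^{2/\mu}\}$ and the probability of interest is the double integral of $f_{X_{11},X_{21}}$ over $\{x_{11}\le \beta x_{21}^{2/\mu}\}$. I would then set up this double integral with $x_{21}$ as the outer variable and $x_{11}$ as the inner variable, which is the natural ordering given that the density in \eqref{joint_pdf_dist} factors the $x_{21}$-dependence into the Gaussian term $\mathrm{e}^{-\pi\lambda_b x_{21}^2}$.

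The crucial step, and really the only subtlety, is to delimit the inner integration range over $x_{11}$ correctly, since \emph{two} constraints act simultaneously. The support of the joint density in \eqref{joint_pdf_dist} requires $x_{11}<x_{21}$ (the density vanishes otherwise, reflecting that the nearest interfering BS is farther than the serving BS), while the TIN condition imposes $x_{11}\le \beta x_{21}^{2/\mu}$. For a fixed $x_{21}$, the admissible interval is therefore $0\le x_{11}\le \min\!\left(x_{21},\beta x_{21}^{2/\mu}\right)$. This intersection of the two upper bounds is exactly the origin of the $\min$ term appearing in \eqref{TIN_prob_received}, and keeping track of which bound is active (which depends on $x_{21}$, $\mu$, and the SNR prefactor) is the point where care is required.

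With the region fixed, the remainder is routine monomial integration. Holding $x_{21}$ fixed and pulling the $x_{21}$-dependent factors outside the inner integral, I would evaluate
\[
\int_0^{\min\left(x_{21},\beta x_{21}^{2/\mu}\right)} \left(2\pi\lambda_b\right)^2 x_{11} x_{21}\,\mathrm{e}^{-\pi\lambda_b x_{21}^2}\,\mathrm{d}x_{11} = 2\left(\pi\lambda_b\right)^2 x_{21}\,\mathrm{e}^{-\pi\lambda_b x_{21}^2}\,{\min}^2\!\left(x_{21},\beta x_{21}^{2/\mu}\right),
\]
using $\int_0^a x_{11}\,\mathrm{d}x_{11}=a^2/2$ together with $\tfrac12\left(2\pi\lambda_b\right)^2 = 2\left(\pi\lambda_b\right)^2$. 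Integrating the resulting expression over $x_{21}\in(0,\infty)$ and substituting back the definition of $\beta$ reproduces \eqref{TIN_prob_received} exactly, which completes the argument. I would expect no further obstacles: the inner integral is a single quadratic, and no convergence questions arise because the factor $\mathrm{e}^{-\pi\lambda_b x_{21}^2}$ dominates the at-most-polynomial growth of the ${\min}^2$ term, guaranteeing that the outer integral is finite.
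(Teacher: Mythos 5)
Your proposal is correct and follows essentially the same route as the paper: it integrates the joint density \eqref{joint_pdf_dist} over the intersection of the cell-association constraint $x_{11}<x_{21}$ and the TIN constraint, which is exactly how the paper expresses $\mathbb{P}\left[\mathbb{A_{UE}}\right]$ via two indicator functions before reducing to the same inner integral with upper limit $\min\left(x_{21},M^{\frac{1}{\alpha\mu}}\left(\frac{N}{P}\right)^{\frac{2-\mu}{\alpha\mu}}x_{21}^{\frac{2}{\mu}}\right)$. The inner quadratic integration and the constant bookkeeping $\tfrac12\left(2\pi\lambda_b\right)^2=2\left(\pi\lambda_b\right)^2$ match the paper's derivation.
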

\begin{proof}
The probability that a UE is active can be written as follows:
\begin{equation}
\mathbb{P\left[A_{UE}\right]}=\mathbb{E}_{X_{11}, X_{21}}\left[\mathbf{1}\left(X_{11}\leq X_{21}\right)\times\mathbf{1}\left(X_{11}\leq \left( M^{\frac{1}{\alpha\mu}}\left(\frac{N}{P}\right)^{\frac{2-\mu}{\alpha\mu}}X_{21}^{\frac{2}{\mu}}\right)\right)\right], \label{temp2}
\end{equation}
where the fist indicator function, $\mathbf{1}\left(\cdot\right)$, is due to the cell association criterion based on the shortest distance and the second indicator function is due to the TIN optimality conditions. The expectation in \eqref{temp2} can be computed from the joint probability density function in \eqref{joint_pdf_dist}, which results in the following integral:
\begin{equation}
\mathbb{P\left[A_{UE}\right]}= \int_0^\infty \int_0^{\min\left(X_{21}, M^{\frac{1}{\alpha\mu}}\left(\frac{N}{P}\right)^{\frac{2-\mu}{\alpha\mu}}X_{21}^{\frac{2}{\mu}}\right)}{\left(2\pi\lambda_b\right)^2 x_{11}x_{21}\mathrm{e}^{-\pi\lambda_b x_{21}^2}\mathrm{d}x_{11}\mathrm{d}x_{21}}. \nonumber
\end{equation}
By solving the inner integral with respect to $x_{11}$, the expression in \eqref{TIN_prob_received} is obtained. This concludes the proof.
\end{proof}

It is worth noting that $\mathbb{P\left[A_{UE}\right]}=1$ if $M=1$ and $\mu=2$. This corresponds to the scenario where the TIN optimality conditions are inactive and the system model reduces to a conventional cellular network without TIN.
\subsection{Probability Density Function of $X_{11}$ Conditioned Upon $\mathbb{A_{UE}}$}
In this section, we compute the probability density function of the distance, $X_{11}$, between the typical BS and the typical UE conditioned upon the event $\mathbb{A_{UE}}$, i.e., the TIN optimality conditions are fulfilled. It is formally stated in the following lemma.
\begin{lem}
The probability density function of $X_{11}$ conditioned upon $\mathbb{A_{UE}}$ in \eqref{TIN_cell_received} can be formulated as follows:
\begin{equation}
f_{X11}\left(x_{11}|\mathbb{A_{UE}}\right)=\frac{2\pi\lambda_{b}x_{11}\mathrm{e^{-\pi\lambda_{b}\max^2\left(x_{11},x_{11}^{\mu/2}\left(\frac{P}{N}\right)^{\frac{2-\mu}{2\alpha}}\left(\frac{1}{M}\right)^{\frac{1}{2\alpha}}\right)}}}{\mathbb{P\left[A_{UE}\right]}}
\label{pdf_X11},
\end{equation}
where $\mathbb{P\left[A_{UE}\right]}$ is given in Lemma \ref{prob_TIN_lem}.
\label{pdf_X11_lem}
\end{lem}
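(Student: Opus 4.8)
The plan is to obtain the conditional density directly from its definition, $f_{X_{11}}\!\left(x_{11}\mid\mathbb{A_{UE}}\right)=f_{X_{11},\mathbb{A_{UE}}}(x_{11})/\mathbb{P}\!\left[\mathbb{A_{UE}}\right]$, where the numerator is the sub-density of $X_{11}$ restricted to the event that the TIN optimality condition holds. Since $\mathbb{P}\!\left[\mathbb{A_{UE}}\right]$ is already supplied by Lemma \ref{prob_TIN_lem}, the entire task reduces to computing this numerator by integrating the joint density $f_{X_{11},X_{21}}$ in \eqref{joint_pdf_dist} over all values of $x_{21}$ for which $\mathbb{A_{UE}}$ is satisfied, at a fixed $x_{11}$.

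First I would rewrite the TIN constraint in \eqref{TIN_cell_received}, namely $x_{11}\le M^{1/(\alpha\mu)}(N/P)^{(2-\mu)/(\alpha\mu)}x_{21}^{2/\mu}$, as an explicit lower bound on $x_{21}$. Solving for $x_{21}$ gives $x_{21}\ge x_{11}^{\mu/2}(P/N)^{(2-\mu)/(2\alpha)}(1/M)^{1/(2\alpha)}$. This must then be intersected with the cell-association constraint $x_{11}<x_{21}$, which is already encoded in the support of \eqref{joint_pdf_dist}. The effective lower limit of integration is therefore the larger of the two bounds, which is precisely $\mathcal{R_I}$ in \eqref{interf_boundary} with $X_{11}$ replaced by $x_{11}$; this is where the $\max$ appearing inside the exponent of \eqref{pdf_X11} originates.

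With $a=\max\!\left(x_{11},\,x_{11}^{\mu/2}(P/N)^{(2-\mu)/(2\alpha)}(1/M)^{1/(2\alpha)}\right)$ identified as the lower limit, I would next evaluate
\begin{equation}
f_{X_{11},\mathbb{A_{UE}}}(x_{11})=\int_{a}^{\infty}\left(2\pi\lambda_b\right)^2 x_{11}\,x_{21}\,\mathrm{e}^{-\pi\lambda_b x_{21}^2}\,\mathrm{d}x_{21}.
\end{equation}
The $x_{21}$-integral is elementary: the substitution $u=\pi\lambda_b x_{21}^2$ reduces it to $\int \mathrm{e}^{-u}\,\mathrm{d}u$, yielding $\int_{a}^{\infty}x_{21}\mathrm{e}^{-\pi\lambda_b x_{21}^2}\,\mathrm{d}x_{21}=\frac{1}{2\pi\lambda_b}\mathrm{e}^{-\pi\lambda_b a^2}$. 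Multiplying by the prefactor $\left(2\pi\lambda_b\right)^2 x_{11}$ collapses the numerator to $2\pi\lambda_b x_{11}\mathrm{e}^{-\pi\lambda_b a^2}$, and dividing by $\mathbb{P}\!\left[\mathbb{A_{UE}}\right]$ yields exactly \eqref{pdf_X11}.

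The step requiring the most care is the translation of the TIN inequality into the lower bound on $x_{21}$ and the correct combination of the two competing constraints (cell association versus TIN optimality) into a single $\max$; the fractional exponents $\mu/2$ and $(2-\mu)/(2\alpha)$ must be tracked faithfully through the algebra, while the remaining Gaussian-tail integral is routine. As a consistency check, setting $M=1$ and $\mu=2$ makes the second argument of the $\max$ equal to $x_{11}$, so the numerator collapses to the unconditional marginal $2\pi\lambda_b x_{11}\mathrm{e}^{-\pi\lambda_b x_{11}^2}$; since $\mathbb{P}\!\left[\mathbb{A_{UE}}\right]=1$ in that case, the conditional density recovers the standard nearest-BS distance distribution, as it must.
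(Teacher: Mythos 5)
Your proposal is correct and follows essentially the same route as the paper: both integrate the joint density \eqref{joint_pdf_dist} over $x_{21}$ from the lower limit $\max\left(x_{11},\,x_{11}^{\mu/2}\left(\frac{P}{N}\right)^{\frac{2-\mu}{2\alpha}}\left(\frac{1}{M}\right)^{\frac{1}{2\alpha}}\right)$ to infinity, the Gaussian-tail integral producing the exponential in \eqref{pdf_X11}. The only cosmetic difference is that the paper first forms the joint CDF $\mathbb{P}\left[X_{11}\leq x_{11},\mathbb{A_{UE}}\right]$ and then differentiates via Leibniz's rule, whereas you marginalize at the density level directly; the computation is identical.
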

\begin{proof}
The probability density function of $X_{11}$ conditioned upon $\mathbb{A_{UE}}$ is defined as follows:
\begin{equation}
f_{X11}\left(x_{11}|\mathbb{A_{UE}}\right)= \frac{\mathrm{d}}{\mathrm{d}x_{11}} \frac{\mathbb{P}\left[X_{11}\leq x_{11}, \mathbb{A_{UE}}\right]}{\mathbb{P\left[A_{UE}\right]}}.
\label{temp1}
\end{equation}
The numerator of \eqref{temp1} can be expressed as follows:
\begin{equation}
\begin{split}
\mathbb{P}\left[X_{11}\leq x_{11}, \mathbb{A_{UE}}\right] &= \mathbb{E}_{X_{11}, X_{21}} \left[\mathbf{1}\left(X_{21} > X_{11}\right)\times\mathbf{1}\left(X_{21}> X_{11}^{\mu/2}\left(\frac{P}{N}\right)^{\frac{2-\mu}{2\alpha}}\left(\frac{1}{M}\right)^{1/{2\alpha}}\right)\right]
\\& = \int_0^{x_{11}} \int_{\max\left(x_{11}, x_{11}^{\mu/2}\left(\frac{P}{N}\right)^{\frac{2-\mu}{2\alpha}}\left(\frac{1}{M}\right)^{1/{2\alpha}}\right)}^\infty \left(2\pi \lambda_b\right)^2 x_{11} x_{21}\mathrm{e}^{-\pi\lambda_b x_{21}^2}\mathrm{d}x_{21}\mathrm{d}x_{11},
\end{split}
\end{equation}
where the last equality is obtained by using \eqref{joint_pdf_dist}. The proof follows by computing the inner integral with respect to $x_{21}$ and then applying Leibniz's integration rule. This concludes the proof.
\end{proof}

It is worth noting that \eqref{pdf_X11} reduces to the probability density function of the distance of the nearest BS to the origin of a conventional cellular network if $M=1$ and $\mu=2$. This corresponds to the scenario where TIN is not applied.
\subsection{SINR Coverage Probability} \label{coverage}
In this section, we provide a tractable analytical framework for computing the SINR coverage probability of cellular networks in which the TIN-based scheduling algorithm is applied.

By capitalizing on the three approximations in Section \ref{section_approximation}, the SINR at the typical UE can be formulated as follows:
\begin{equation}
\mathrm{SINR}=\frac{h_{\mathrm{11}}X_{\mathrm{11}}^{-\alpha}}{\sum_{i\in\Phi_{b}^{'}}h_{i}D_{i}^{-\alpha}\mathbf{1}\left(D_{i}\geq\max(X_{11},X_{11}^{\frac{\mu}{2}}\left(\frac{P}{N}\right)^{\frac{2-\mu}{2\alpha}}\left(\frac{1}{M}\right)^{\frac{1}{2\alpha}})\right)+\frac{N}{P}},
\label{sinr_exp}
\end{equation}
where $h_{i}$ is the channel gain of the $i$th interfering BS, $h_{11}$ is the channel gain of the intended link, $D_{i}$ is the distance between the $i$th interfering BS and the typical UE, $\Phi_{b}^{'}$ is the inhomogeneous PPP of interfering BSs whose density is defined in \eqref{den_interf_BS}, and the indicator function makes explicit that the interfering BSs must lie outside the inhomogeneity radius defined in \eqref{interf_boundary}.

We are interested in computing the \textit{effective} SINR coverage probability, $\mathcal{C}_{net}$, of the typical UE, which accounts for the fact that the typical UE may not be served by the typical BS if it is turned off because it does not fulfill the TIN optimality conditions. $\mathcal{C}_{net}$ can be formulated as follows:
\begin{equation}
\mathcal{C}_{net} = \mathbb{P\left[A_{UE}\right]} \mathcal{C},
\label{net_sinr_cov}
\end{equation}
where $\mathbb{P\left[A_{UE}\right]}$ is the  probability that the typical UE is active, which is given in \eqref{TIN_prob_received}, and $\mathcal{C}$ is the SINR coverage probability of the typical active UE. This latter probability is defined and computed in the following theorem.

\begin{thm} \label{TheoCov}
Let $\Theta$ be the SINR decoding threshold. The SINR coverage probability of the typical active UE, ${\mathcal{C}} = {\mathbb{P}}\left[ {{\rm{SINR}} \ge \Theta } \right]$, can be formulated as follows:
\begin{equation}
\mathcal{C}=\frac{2\pi\lambda_{b}}{\mathbb{P}\left[\mathbb{A_{UE}}\right]}\int_{0}^{\infty}x_{11}\mathrm{e^{-\pi\lambda_{b}\max^{2}\left(x_{11},x_{11}^{\mu/2}\left(\frac{P}{N}\right)^{\frac{2-\mu}{2\alpha}}\left(\frac{1}{M}\right)^{\frac{1}{2\alpha}}\right)}\mathrm{e^{\left(-\frac{x_{11}^{\alpha}\Theta N}{P}\right)}}}\mathbb{\mathcal{L_{I}}}\left(x_{11}^{\alpha}\Theta\right)\mathrm{d}x_{11},
\label{sinr_coverage}
\end{equation}
where $\mathbb{\mathcal{L_{I}}}\left(s\right)$ is the Laplace transform of the interference:
\begin{multline}
\mathbb{\mathcal{L_{I}}}\left(s\right)= \mathrm{e}^{\left(\frac{-2\pi\lambda_{b}\mathbb{P}\left[\mathbb{A_{UE}}\right]}{\alpha-2}s\left(\max^{2}\left(x_{11},x_{11}^{\mu/2}\left(\frac{P}{N}\right)^{\frac{2-\mu}{2\alpha}}\left(\frac{1}{M}\right)^{\frac{1}{2\alpha}}\right)\right)^{1-\frac{\alpha}{2}}\right.}  \\
   {}^{\left.{}_{2}\mathrm{F}{}_{1}\left[1,1-\frac{2}{\alpha},2-\frac{2}{\alpha};-s\left(\max^{2}\left(x_{11},x_{11}^{\mu/2}\left(\frac{P}{N}\right)^{\frac{2-\mu}{2\alpha}}\left(\frac{1}{M}\right)^{\frac{1}{2\alpha}}\right)\right)^{-\frac{\alpha}{2}}\right]\right)}.
\label{sinr_coverage_Laplace2}
\end{multline}
\end{thm}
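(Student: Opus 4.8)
The plan is to compute $\mathcal{C}$ by conditioning on the serving distance $X_{11}$ and exploiting Rayleigh fading, under which the intended channel gain $h_{11}$ is exponentially distributed with unit mean. First I would rewrite the coverage event. From \eqref{sinr_exp}, $\mathrm{SINR}\ge\Theta$ is equivalent to
\[
h_{11}\ge\Theta x_{11}^{\alpha}\left(I+\frac{N}{P}\right),
\]
where $I=\sum_{i\in\Phi_{b}^{'}}h_{i}D_{i}^{-\alpha}\mathbf{1}\left(D_{i}\ge\mathcal{R_I}\right)$ is the aggregate interference. Conditioning on $X_{11}=x_{11}$ and on $I$, the unit-mean exponential tail gives $\mathbb{P}\left[\mathrm{SINR}\ge\Theta\mid x_{11},I\right]=\mathrm{e}^{-\Theta x_{11}^{\alpha}\left(I+N/P\right)}$. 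Taking the expectation over $I$ factors out the deterministic noise term and leaves the interference Laplace transform, producing $\mathrm{e}^{-\Theta x_{11}^{\alpha}N/P}\,\mathcal{L_I}\left(\Theta x_{11}^{\alpha}\right)$ with $\mathcal{L_I}(s)=\mathbb{E}\left[\mathrm{e}^{-sI}\right]$ evaluated at $s=\Theta x_{11}^{\alpha}$.

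Next I would evaluate $\mathcal{L_I}(s)$ through the probability generating functional (PGFL) of the inhomogeneous PPP $\Phi_{b}^{'}$, whose density is specified in \eqref{den_interf_BS}. Because the interfering gains are i.i.d.\ unit-mean exponentials and independent of the point process, I would first average over the marks using $\mathbb{E}_{h}\left[\mathrm{e}^{-shr^{-\alpha}}\right]=\left(1+sr^{-\alpha}\right)^{-1}$, and then apply the PGFL to obtain
\[
\mathcal{L_I}(s)=\exp\left(-2\pi\lambda_{b}\mathbb{P}\left[\mathbb{A_{UE}}\right]\int_{\mathcal{R_I}}^{\infty}\frac{s\,r^{1-\alpha}}{1+s\,r^{-\alpha}}\,\mathrm{d}r\right),
\]
where the constant density $\lambda_{b}\mathbb{P}\left[\mathbb{A_{UE}}\right]$ and the lower limit $\mathcal{R_I}$ follow directly from \eqref{den_interf_BS} and the inhomogeneity ball \eqref{interf_boundary}. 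I would then reduce the radial integral to closed form via the substitution $v=r^{2}$ and recognize the resulting incomplete integral as an Euler-type representation of the Gauss hypergeometric function ${}_2\mathrm{F}_1$, which yields the exponent in \eqref{sinr_coverage_Laplace2} after identifying $\mathcal{R_I}^{2}=\max^{2}\left(x_{11},x_{11}^{\mu/2}\left(\frac{P}{N}\right)^{\frac{2-\mu}{2\alpha}}\left(\frac{1}{M}\right)^{\frac{1}{2\alpha}}\right)$.

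Finally I would de-condition on $X_{11}$ by integrating against the conditional density from Lemma \ref{pdf_X11_lem}:
\[
\mathcal{C}=\int_{0}^{\infty}f_{X11}\left(x_{11}|\mathbb{A_{UE}}\right)\,\mathrm{e}^{-\Theta x_{11}^{\alpha}N/P}\,\mathcal{L_I}\left(\Theta x_{11}^{\alpha}\right)\,\mathrm{d}x_{11},
\]
and substitute \eqref{pdf_X11}; the factor $\mathbb{P}\left[\mathbb{A_{UE}}\right]^{-1}$ in the conditional density reproduces the prefactor in \eqref{sinr_coverage}, and the exponential $\mathrm{e}^{-\pi\lambda_{b}\max^{2}(\cdots)}$ carries through verbatim. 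The hard part will be the radial integral in $\mathcal{L_I}(s)$: unlike the textbook derivation, its lower limit is not zero but the $x_{11}$-dependent inhomogeneity radius $\mathcal{R_I}$, so the standard closed form over the whole plane does not apply and the incomplete integral must be expressed through ${}_2\mathrm{F}_1$ rather than a plain gamma-function form. A secondary subtlety is bookkeeping the $\max(\cdot,\cdot)$ term consistently, since the same quantity governs both the exponent of the conditional density of $X_{11}$ and the lower limit of the interference integral.
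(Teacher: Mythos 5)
Your proposal is correct and follows essentially the same route as the paper's proof in Appendix A: condition on $X_{11}$ via the density of Lemma \ref{pdf_X11_lem}, exploit the unit-mean exponential tail of $h_{11}$ to pull out the noise term and the interference Laplace transform, and evaluate $\mathcal{L_I}(s)$ through the PGFL of the inhomogeneous PPP with density $\lambda_b\mathbb{P}\left[\mathbb{A_{UE}}\right]$ outside the inhomogeneity radius $\mathcal{R_I}$, finally expressing the truncated radial integral via ${}_2\mathrm{F}_1$. Your integrand $s\,r^{1-\alpha}/(1+s\,r^{-\alpha})$ is identical to the paper's $\left(1-\frac{1}{1+s u^{-\alpha}}\right)u$, and your bookkeeping of the $\max(\cdot,\cdot)$ term matches the paper's.
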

\begin{proof}
See Appendix \ref{proof_thm1}.
\end{proof}
\begin{cor}
If $M=1$ and $\mu=2$, the SINR coverage probability in Theorem \ref{TheoCov} simplifies as follows:
\begin{equation}
\mathcal{C}=2\pi\lambda_{b}\int_{0}^{\infty}x_{11}\mathrm{e}^{-\left(\pi\lambda_{b} x_{11}^2+ \frac{x_{11}^{\alpha} \Theta N}{P} + \pi\lambda_{b}x_{11}^2 \Theta^{\frac{2}{\alpha}} \int\displaylimits_{\Theta^{\frac{-2}{\alpha}}}^{\infty}\frac{1}{1+z^{\alpha/2}}\mathrm{d}z \right)}\mathrm{d}x_{11},
\label{cor1_eq}
\end{equation}
which is the SINR coverage probability of a conventional cellular network when TIN is not applied \cite{Geff_DL}.
\label{cor1}
\end{cor}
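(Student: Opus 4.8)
The plan is to substitute $M=1$ and $\mu=2$ directly into the general coverage expression of Theorem \ref{TheoCov} and to track how every TIN-specific quantity collapses onto its conventional counterpart. First I would invoke the remark following Lemma \ref{prob_TIN_lem}, which gives $\mathbb{P}\left[\mathbb{A_{UE}}\right]=1$ in this regime, so that the normalizing prefactor $2\pi\lambda_{b}/\mathbb{P}\left[\mathbb{A_{UE}}\right]$ in \eqref{sinr_coverage} reduces to $2\pi\lambda_{b}$. Next I would evaluate the argument of the maximum: with $\mu=2$ the exponent $(2-\mu)/(2\alpha)$ vanishes and $(1/M)^{1/(2\alpha)}=1$, hence $x_{11}^{\mu/2}\left(\frac{P}{N}\right)^{\frac{2-\mu}{2\alpha}}\left(\frac{1}{M}\right)^{\frac{1}{2\alpha}}=x_{11}$ and $\max^{2}\!\left(x_{11},\cdot\right)=x_{11}^{2}$. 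This immediately turns the first exponential in \eqref{sinr_coverage} into $\mathrm{e}^{-\pi\lambda_{b}x_{11}^{2}}$, while the noise exponential $\mathrm{e}^{-x_{11}^{\alpha}\Theta N/P}$ is already in the form appearing in \eqref{cor1_eq}.

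The second step is to simplify the Laplace transform \eqref{sinr_coverage_Laplace2} evaluated at $s=x_{11}^{\alpha}\Theta$. Using $\mathbb{P}\left[\mathbb{A_{UE}}\right]=1$ and $\max^{2}=x_{11}^{2}$, I would note $\left(x_{11}^{2}\right)^{1-\alpha/2}=x_{11}^{2-\alpha}$ and $\left(x_{11}^{2}\right)^{-\alpha/2}=x_{11}^{-\alpha}$, so that $s\left(x_{11}^{2}\right)^{1-\alpha/2}=x_{11}^{2}\Theta$ and $-s\left(x_{11}^{2}\right)^{-\alpha/2}=-\Theta$. The exponent of $\mathcal{L_{I}}$ then becomes $-\tfrac{2\pi\lambda_{b}}{\alpha-2}x_{11}^{2}\Theta\,{}_2\mathrm{F}_1\!\left[1,1-\tfrac{2}{\alpha},2-\tfrac{2}{\alpha};-\Theta\right]$, which is manifestly proportional to $x_{11}^{2}$ and therefore a candidate for the third term inside the exponent of \eqref{cor1_eq}.

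The remaining, and only nontrivial, step is to verify the identity
\begin{equation}
\frac{2}{\alpha-2}\,\Theta\,{}_2\mathrm{F}_1\!\left[1,1-\tfrac{2}{\alpha},2-\tfrac{2}{\alpha};-\Theta\right]=\Theta^{2/\alpha}\int_{\Theta^{-2/\alpha}}^{\infty}\frac{1}{1+z^{\alpha/2}}\,\mathrm{d}z. \nonumber
\end{equation}
My plan here is to apply Euler's integral representation ${}_2\mathrm{F}_1(a,b,c;w)=\frac{\Gamma(c)}{\Gamma(b)\Gamma(c-b)}\int_0^1 t^{b-1}(1-t)^{c-b-1}(1-wt)^{-a}\mathrm{d}t$ with $a=1$, $b=1-2/\alpha$, $c=2-2/\alpha$, $w=-\Theta$. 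Since $c-b=1$, the Gamma prefactor collapses via $\Gamma(2-2/\alpha)/\Gamma(1-2/\alpha)=1-2/\alpha=(\alpha-2)/\alpha$, so the left-hand side reduces to $\tfrac{2\Theta}{\alpha}\int_0^1 t^{-2/\alpha}(1+\Theta t)^{-1}\mathrm{d}t$. On the right-hand side I would use the change of variable $z=(\Theta t)^{-2/\alpha}$, which maps $z\in[\Theta^{-2/\alpha},\infty)$ to $t\in(0,1]$ and, after simplification, yields the same $\tfrac{2\Theta}{\alpha}\int_0^1 t^{-2/\alpha}(1+\Theta t)^{-1}\mathrm{d}t$, closing the identity.

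I expect this last hypergeometric-to-integral reduction to be the main obstacle, since it is the only place where the abstract ${}_2\mathrm{F}_1$ form of Theorem \ref{TheoCov} must be reconciled with the classical coverage integral of \cite{Geff_DL}; the path-loss condition $\alpha>2$ is what guarantees that both integrals converge and that the prefactor $1/(\alpha-2)$ is well defined. Once the identity is in hand, collecting the three exponents (the $\mathrm{e}^{-\pi\lambda_{b}x_{11}^{2}}$ factor, the noise term, and the converted interference term) into a single exponential reproduces \eqref{cor1_eq} verbatim, which completes the argument.
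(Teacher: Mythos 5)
Your proposal is correct: the substitutions $\mathbb{P}\left[\mathbb{A_{UE}}\right]=1$ and $\max^{2}(\cdot)=x_{11}^{2}$ are right, and the hypergeometric identity you isolate does hold — with $a=1$, $b=1-2/\alpha$, $c=2-2/\alpha$ one has $c-b=1$, the Euler representation gives $\frac{\alpha-2}{\alpha}\int_{0}^{1}t^{-2/\alpha}(1+\Theta t)^{-1}\mathrm{d}t$, and the substitution $z=(\Theta t)^{-2/\alpha}$ turns the right-hand side into the same expression, both sides equalling $\frac{2\Theta}{\alpha}\int_{0}^{1}t^{-2/\alpha}(1+\Theta t)^{-1}\mathrm{d}t$ (with $\alpha>2$ ensuring $b>0$ as you note). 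The route differs from the paper's in one respect: the paper does not prove the corollary from the final hypergeometric form of Theorem \ref{TheoCov}, but rather relies on the intermediate integral form of the Laplace transform derived in Appendix \ref{proof_thm1}, namely $\mathcal{L}_I(s)=\exp\bigl(-\pi\lambda_b\mathbb{P}\left[\mathbb{A_{UE}}\right]s^{2/\alpha}\int_{s^{-2/\alpha}\max^{2}(\cdot)}^{\infty}\frac{1}{1+z^{\alpha/2}}\mathrm{d}z\bigr)$; setting $M=1$, $\mu=2$ there makes the lower limit $s^{-2/\alpha}x_{11}^{2}=\Theta^{-2/\alpha}$ and the prefactor $s^{2/\alpha}=x_{11}^{2}\Theta^{2/\alpha}$, so \eqref{cor1_eq} drops out with no hypergeometric manipulation at all. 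Your approach effectively reverses the ``simple algebraic manipulations'' that convert the integral form into the ${}_{2}\mathrm{F}_{1}$ form, which costs you the extra identity verification but has the virtue of proving the corollary directly from the theorem as stated, without appealing to an intermediate step buried in the appendix. Both arguments are sound.
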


\subsection{SINR Average Rate}
In this section, we provide a tractable analytical framework for computing the average rate of cellular networks in which the TIN-based scheduling algorithm is applied.

Similar to the SINR coverage probability, we are interested in computing the \textit{effective} SINR average rate, $\mathcal{R_{SE}}_{net}$, of the typical UE, which accounts for the fact that the typical UE may not be served by the typical BS if it is turned off because it does not fulfill the TIN optimality conditions. $\mathcal{R_{SE}}_{net}$ can be formulated as follows:
\begin{equation}
\mathcal{R_{SE}}_{net} = \mathbb{P\left[A_{UE}\right]} \mathcal{R_{SE}}
\label{se_network},
\end{equation}
where a similar notation as in \eqref{net_sinr_cov} is used and $\mathcal{R_{SE}}$ is the SINR average rate of the typical active UE. This latter average rate is defined and computed in the following theorem and is measured in nats/sec/Hz.

\begin{thm} \label{TheoRate}
Let $\mathcal{R_{SE}} = \mathbb{E}\left[\ln \left[1+\mathrm{SINR}\right]\right]$, where the SINR is given in \eqref{sinr_exp}. The SINR average rate of the typical active UE, $\mathcal{R_{SE}}$, can be formulated as follows:
\begin{equation}
\mathcal{R_{SE}}\!=\!\frac{2\pi\lambda_{b}}{\mathbb{P}\left[\mathbb{A_{UE}}\right]}\!\intop_{0}^{\infty}\!x_{11}\mathrm{e}^{\left(-\pi\lambda_{b}\max^{2}\left(x_{11},x_{11}^{\mu/2}\left(\frac{P}{N}\right)^{\frac{2-\mu}{2\alpha}}\left(\frac{1}{M}\right)^{\frac{1}{2\alpha}}\right)\right)}\!\!\intop_{\tau>0}^{\infty}\!\!\mathrm{e}^{\left(-\frac{N}{P}x_{11}^{\alpha}\left(\mathrm{e^{\tau}-1}\right)\right)}\mathbb{\mathcal{L_{I}}}\left(x_{11}^{\alpha}\left(\mathrm{e^{\tau}-1}\right)\right)\mathrm{d}\tau\mathrm{d}x_{11},
\label{SE_main}
\end{equation}
where $\mathbb{\mathcal{L_{I}}}\left(\cdot\right)$ is the Laplace transform of the interference:
\begin{multline}
\mathbb{\mathcal{L_{I}}}\left(x_{11}^{\alpha}\left(\mathrm{e^{\tau}-1}\right)\right)=\mathrm{e}^{\left(\frac{-2\pi\lambda_{b}\mathbb{P}\left[\mathbb{A_{UE}}\right]}{\alpha-2}x_{11}^{\alpha}\left(\mathrm{e^{\tau}-1}\right)\left(\max^{2}\left(x_{11},x_{11}^{\mu/2}\left(\frac{P}{N}\right)^{\frac{2-\mu}{2\alpha}}\left(\frac{1}{M}\right)^{\frac{1}{2\alpha}}\right)\right)^{1-\frac{\alpha}{2}}\right.}\\ {}^{\left.{}_{2}\mathrm{F}{}_{1}\left[1,1-\frac{2}{\alpha},2-\frac{2}{\alpha};-x_{11}^{\alpha}\left(\mathrm{e^{\tau}-1}\right)\left(\max^{2}\left(x_{11},x_{11}^{\mu/2}\left(\frac{P}{N}\right)^{\frac{2-\mu}{2\alpha}}\left(\frac{1}{M}\right)^{\frac{1}{2\alpha}}\right)\right)^{-\frac{\alpha}{2}}\right]\right)}
\label{spectral_efficiency_thm}
\end{multline}
\end{thm}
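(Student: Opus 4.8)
The plan is to derive the average rate directly from the coverage probability of Theorem~\ref{TheoCov}, exploiting the fact that $\ln(1+\mathrm{SINR})$ is a non-negative random variable. For any non-negative random variable $Y$ one has the tail representation $\mathbb{E}[Y]=\int_0^\infty \mathbb{P}[Y>\tau]\,\mathrm{d}\tau$. Applying this with $Y=\ln(1+\mathrm{SINR})$ and using the equivalence $\ln(1+\mathrm{SINR})>\tau \Leftrightarrow \mathrm{SINR}>\mathrm{e}^\tau-1$, I would write
\[
\mathcal{R_{SE}}=\mathbb{E}\left[\ln(1+\mathrm{SINR})\right]=\int_0^\infty \mathbb{P}\left[\mathrm{SINR}>\mathrm{e}^\tau-1\right]\mathrm{d}\tau.
\]
The inner probability is precisely the SINR coverage probability of the typical active UE evaluated at the decoding threshold $\Theta=\mathrm{e}^\tau-1$. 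Because the SINR in \eqref{sinr_exp} is a continuous random variable (owing to Rayleigh fading and the continuous distance distributions), the distinction between $\ge$ and $>$ is immaterial, so $\mathbb{P}[\mathrm{SINR}>\mathrm{e}^\tau-1]=\mathcal{C}|_{\Theta=\mathrm{e}^\tau-1}$.

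Next I would substitute the closed-form expression for $\mathcal{C}$ from \eqref{sinr_coverage} into the integral above, making the replacement $\Theta \to \mathrm{e}^\tau-1$ throughout. Concretely, the noise factor $\mathrm{e}^{-x_{11}^\alpha \Theta N/P}$ becomes $\mathrm{e}^{-(N/P)x_{11}^\alpha(\mathrm{e}^\tau-1)}$, while the Laplace-transform argument $x_{11}^\alpha\Theta$ becomes $x_{11}^\alpha(\mathrm{e}^\tau-1)$, which turns \eqref{sinr_coverage_Laplace2} into exactly \eqref{spectral_efficiency_thm}. The prefactor $2\pi\lambda_b/\mathbb{P}\left[\mathbb{A_{UE}}\right]$ and the $\max^2(\cdot)$-exponential, being independent of $\Theta$, carry over unchanged. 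Since the integrand is non-negative, Tonelli's theorem justifies interchanging the $\tau$-integration with the $x_{11}$-integration; keeping the $x_{11}$-integral on the outside and the $\tau$-integral on the inside then yields \eqref{SE_main} together with \eqref{spectral_efficiency_thm}.

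The bulk of the analytical effort---the Laplace transform of the aggregate interference from the inhomogeneous PPP of active BSs defined in \eqref{den_interf_BS}, together with the conditional density of $X_{11}$ from Lemma~\ref{pdf_X11_lem}---has already been carried out in the proof of Theorem~\ref{TheoCov}. Consequently there is no genuinely new obstacle: the single conceptual step is the tail-integral identity, and the only point requiring care is the interchange of the order of integration. This interchange is fully legitimate by Tonelli because every factor in the integrand is non-negative (the density-type exponential, the noise exponential, and the Laplace transform itself, which is an exponential of a non-positive exponent for $\alpha>2$). I would therefore present this result essentially as a corollary of Theorem~\ref{TheoCov}.
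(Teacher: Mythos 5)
Your proposal is correct and follows essentially the same route as the paper: the paper likewise uses the tail-integral identity $\mathbb{E}[Y]=\int_0^\infty \mathbb{P}[Y\ge\tau]\,\mathrm{d}\tau$ applied to $\ln(1+\mathrm{SINR})$ and reuses the Laplace-transform computation from Theorem~\ref{TheoCov} with $\Theta$ replaced by $\mathrm{e}^\tau-1$. The only cosmetic difference is that the paper conditions on $X_{11}$ before invoking the tail identity, whereas you apply it first and then interchange the integrals by Tonelli; both orderings are valid.
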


\begin{proof}
By definition of SINR average rate, we have:
\begin{equation}
\begin{split}
&\mathcal{R_{SE}}\!\! = \!\!\mathbb{E}\left[\ln \left[1+\mathrm{SINR}\right]\right]\! \!=\!\! \frac{2\pi\lambda_{b}}{\mathbb{P}\left[\mathbb{A_{UE}}\right]}\int_0^\infty \!\!\! \mathbb{E}\left[\!\ln\! \left[1+\frac{h_{\mathrm{11}}x_{\mathrm{11}}^{-\alpha}}{I+\frac{N}{P}}\right]\!\right] \!\!x_{11}\mathrm{e}^{-\pi\lambda_{b}\max^{2}\left(\!x_{11},x_{11}^{\mu/2}\left(\frac{P}{N}\right)^{\frac{2-\mu}{2\alpha}}\left(\frac{1}{M}\right)^{\frac{1}{2\alpha}}\!\right)}\!\mathrm{d}x_{11}\\
&=\frac{2\pi\lambda_{b}}{\mathbb{P}\left[\mathbb{A_{UE}}\right]}\int_0^\infty x_{11}\mathrm{e}^{-\pi\lambda_{b}\max^{2}\left(x_{11},x_{11}^{\mu/2}\left(\frac{P}{N}\right)^{\frac{2-\mu}{2\alpha}}\left(\frac{1}{M}\right)^{\frac{1}{2\alpha}}\right)} \int_{\tau>0}^\infty \mathbb{P}\left[\ln \left[1+\frac{h_{\mathrm{11}}x_{\mathrm{11}}^{-\alpha}}{I+\frac{N}{P}}\right]\geq \tau \right] \mathrm{d}\tau\mathrm{d}x_{11}\\
&=\frac{2\pi\lambda_{b}}{\mathbb{P}\left[\mathbb{A_{UE}}\right]}\int_0^\infty x_{11}\mathrm{e}^{-\pi\lambda_{b}\max^{2}\left(x_{11},x_{11}^{\mu/2}\left(\frac{P}{N}\right)^{\frac{2-\mu}{2\alpha}}\left(\frac{1}{M}\right)^{\frac{1}{2\alpha}}\right)} \int_{\tau>0}^\infty \mathrm{e}^{-x_{11}^\alpha \left(\mathrm{e}^\tau - 1\right)\frac{N}{P}} \mathcal{L}_I\left(x_{11}^\alpha \left(\mathrm{e}^\tau - 1\right)\right) \mathrm{d}\tau\mathrm{d}x_{11},
\end{split}
\end{equation}
where $I$ denotes the other-cell interference and its Laplace transform $\mathcal{L}_I\left(x_{11}^\alpha \left(\mathrm{e}^\tau - 1\right)\right)$ can be computed by using the same analytical steps as for the SINR coverage probability in Theorem \ref{TheoCov}. This concludes the proof.
\end{proof}
\begin{cor}
If $M=1$ and $\mu=2$, the SINR average rate in Theorem \ref{TheoRate} simplifies as follows:
\begin{equation}
\mathcal{R_{SE}}=2\pi\lambda_{b}\intop_{0}^{\infty}x_{11}\mathrm{e}^{\left(-\pi\lambda_{b}x_{11}^2\right)}\intop_{\tau>0}^{\infty}\mathrm{e}^{-\left(\frac{N}{P}x_{11}^{\alpha}\left(\mathrm{e^{\tau}-1}\right)+\pi\lambda_{b}x_{11}^{2}\left(\mathrm{e^{\tau}-1}\right)^{\frac{2}{\alpha}}\int_{\left(\mathrm{e^{\tau}-1}\right)^{-\frac{2}{\alpha}}}^{\infty}\frac{1}{1+z^{\alpha/2}}\mathrm{d}z\right)}\mathrm{d}\tau\mathrm{d}x_{11},
\label{spectral_eff_eq}
\end{equation}
\label{spectral_eff_cor}
which is the SINR average rate of a conventional cellular network if TIN is not applied \cite{Geff_DL}.
\end{cor}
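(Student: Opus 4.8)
The plan is to specialise Theorem \ref{TheoRate} to the point $M=1$, $\mu=2$ and verify that every factor in \eqref{SE_main} collapses onto the corresponding factor of \eqref{spectral_eff_eq}. The argument will run in exact parallel to Corollary \ref{cor1}, since \eqref{SE_main} differs from \eqref{sinr_coverage} only by the extra outer integration over $\tau$ and by the replacement of the fixed Laplace argument $x_{11}^\alpha\Theta$ with $x_{11}^\alpha(e^\tau-1)$; all the inner simplifications are identical, so the corollary is essentially Corollary \ref{cor1} carried through an additional integral.

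First I would dispose of the prefactor. By the remark following Lemma \ref{prob_TIN_lem}, $\mathbb{P}\left[\mathbb{A_{UE}}\right]=1$ when $M=1$ and $\mu=2$, so the leading constant $2\pi\lambda_b/\mathbb{P}\left[\mathbb{A_{UE}}\right]$ in \eqref{SE_main} becomes $2\pi\lambda_b$ and the factor $\mathbb{P}\left[\mathbb{A_{UE}}\right]$ inside the exponent of \eqref{spectral_efficiency_thm} drops out. Next I would simplify the recurring maximum. Setting $\mu=2$ makes the exponent $\mu/2=1$ and the power $(2-\mu)/(2\alpha)=0$, while $M=1$ makes $(1/M)^{1/(2\alpha)}=1$; hence the second argument of the maximum equals $x_{11}^{\,1}(P/N)^0\cdot 1=x_{11}$, so that $\max^2\!\left(x_{11},\,x_{11}^{\mu/2}(P/N)^{\frac{2-\mu}{2\alpha}}(1/M)^{\frac{1}{2\alpha}}\right)=x_{11}^2$. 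This turns the density weight $\exp(-\pi\lambda_b\max^2(\cdots))$ into $\exp(-\pi\lambda_b x_{11}^2)$, matching the outer density factor of \eqref{spectral_eff_eq}; the noise term $\exp(-\tfrac{N}{P}x_{11}^\alpha(e^\tau-1))$ is already in its final form and is unaffected by the specialisation.

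The crux is the Laplace transform \eqref{spectral_efficiency_thm}. Writing $s=x_{11}^\alpha(e^\tau-1)$ and using $\max^2=x_{11}^2$, one has $s\,(x_{11}^2)^{1-\alpha/2}=x_{11}^2(e^\tau-1)$ and $-s\,(x_{11}^2)^{-\alpha/2}=-(e^\tau-1)$, so the exponent reduces to $\tfrac{-2\pi\lambda_b}{\alpha-2}x_{11}^2(e^\tau-1)\,{}_2F_1\!\left[1,1-\tfrac{2}{\alpha},2-\tfrac{2}{\alpha};-(e^\tau-1)\right]$. To recover \eqref{spectral_eff_eq} I must show this equals $-\pi\lambda_b x_{11}^2(e^\tau-1)^{2/\alpha}\int_{(e^\tau-1)^{-2/\alpha}}^{\infty}\frac{1}{1+z^{\alpha/2}}\,\mathrm{d}z$. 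Rather than manipulate the hypergeometric function directly, I would retreat one step in the derivation underlying Theorem \ref{TheoCov}: the ${}_2F_1$ expression is merely the closed form of the standard PPP interference integral $2\pi\lambda_b\int_{x_{11}}^{\infty}\big(1-\tfrac{1}{1+s v^{-\alpha}}\big)v\,\mathrm{d}v$, here with lower limit $x_{11}$ (the specialised inhomogeneity radius) and $\mathbb{P}\left[\mathbb{A_{UE}}\right]=1$. Substituting $u=v^2$ and then $z=u\,s^{-2/\alpha}$ sends the lower limit $x_{11}$ to $(e^\tau-1)^{-2/\alpha}$, contributes the Jacobian factor $s^{2/\alpha}=x_{11}^2(e^\tau-1)^{2/\alpha}$, and reshapes the integrand into $1/(1+z^{\alpha/2})$, producing exactly the claimed integral; the passage between the two closed forms is Gauss's integral representation of ${}_2F_1$.

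Combining the three simplified factors reconstructs \eqref{spectral_eff_eq}, which is the conventional-network average rate of \cite{Geff_DL}. I expect the main obstacle to be pure bookkeeping in the change of variables — in particular tracking the exponents $2/\alpha$ and $1-\alpha/2$ so that the lower limit, the Jacobian, and the integrand all land on their target forms simultaneously — rather than any conceptual difficulty, since the specialisation $M=1$, $\mu=2$ removes all the TIN-induced inhomogeneity and reduces the active interferer density to the full PPP.
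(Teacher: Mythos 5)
Your proposal is correct and matches the paper's (implicit) route: the corollary is a direct specialisation of Theorem \ref{TheoRate}, and your retreat to the integral form of the interference Laplace transform is exactly the expression derived in Appendix \ref{proof_thm1} (step $(d)$ of \eqref{laplace1_sinr}) before it is rewritten as a ${}_{2}\mathrm{F}_{1}$, so the lower limit $(e^{\tau}-1)^{-2/\alpha}$ and the factor $x_{11}^{2}(e^{\tau}-1)^{2/\alpha}$ fall out immediately. All the individual substitutions ($\mathbb{P}[\mathbb{A_{UE}}]=1$, $\max^{2}(\cdot,\cdot)=x_{11}^{2}$, the change of variables) check out.
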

\section{Asymptotic Analysis and System Optimization}
\label{asymptotic_analysis}
The aim of this section is to study the existence and optimal setup for the pair of system parameters $M$ and $\mu$, in order to maximize the effective SINR coverage probability. We focus our attention, in particular, on the effective SINR coverage probability, since the corresponding analytical framework is more tractable than \eqref{SE_main}. 
%
%

The effective SINR coverage probability can be formulated as follows:
\begin{equation}
 \mathcal{C}_{net} = 2\pi\lambda_{b}\int_{0}^{\infty}x_{11}\mathrm{e}^{-\pi\lambda_{b}\max^{2}\left(x_{11},x_{11}^{\mu/2}\left(\frac{P}{N}\right)^{\frac{2-\mu}{2\alpha}}\left(\frac{1}{M}\right)^{\frac{1}{2\alpha}}\right)}\mathrm{e}^{\left(-\frac{sN}{P}\right)}\mathbb{\mathcal{L_{I}}}\left(s\right)\mathrm{d}x_{11},
\label{eff_coverage}
\end{equation}
where  $s=x_{11}^{\alpha}\Theta$, and $\mathbb{\mathcal{L_{I}}}\left(\cdot\right)$ is given in \eqref{sinr_coverage_Laplace2}.

For simplicity, and without loss of generality, we assume $M=1$ and focus our attention on optimizing $\mu \in [1,2]$.
In order to find the optimal value of $\mu$, it is convenient to have a closed form solution of the integral in \eqref{eff_coverage}. To this end, we employ the following approach.

First, we rewrite the $\max(\cdot,\cdot)$ function, which allows us to split the integration range as follows:
\begin{multline}
\mathcal{C}_{net} = 2\pi\lambda_{b}\int_{0}^{\beta^{\frac{1}{\alpha}}}x_{11}\mathrm{e}^{\left(-\pi\lambda_{b}x_{11}^{\mu}\beta^{\frac{2-\mu}{\alpha}}\right)}\mathrm{e}^{\left(-\frac{x_{11}^{\alpha}\Theta}{\beta}\right)}\mathrm{e}^{\left(-\pi\lambda_{b}\mathbb{P}\left[\mathbb{A_{UE}}\right]x_{11}^2 \Theta^{\frac{2}{\alpha}}\int\displaylimits_{x_{11}^{\mu-2}\Theta^{\frac{-2}{\alpha}}\beta^{\frac{2-\mu}{\alpha}}}^{\infty}\frac{1}{1+z^{\alpha/2}}\mathrm{d}z\right)}\mathrm{d}x_{11}
\\ +
2\pi\lambda_{b}\int_{\beta^{\frac{1}{\alpha}}}^{\infty}x_{11}\mathrm{e}^{\left(-\pi\lambda_{b}x_{11}^2\right)}\mathrm{e}^{\left(-\frac{x_{11}^{\alpha}\Theta}{\beta}\right)}\mathrm{e}^{\left(-\pi\lambda_{b}\mathbb{P}\left[\mathbb{A_{UE}}\right]x_{11}^2 \Theta^{\frac{2}{\alpha}}\int\displaylimits_{\Theta^{\frac{-2}{\alpha}}}^{\infty}\frac{1}{1+z^{\alpha/2}}\mathrm{d}z\right)}\mathrm{d}x_{11}
\label{eff_cov_splitt},
\end{multline}
where we introduce the notation $\beta=\frac{P}{N}$. 

In \eqref{eff_cov_splitt}, the second integral is negligible compared with the first integral for sufficiently high values of the SNR $\beta$ (high SNR regime). Under this assumption, \eqref{eff_cov_splitt} can be approximated as follows:
\begin{equation}
\mathcal{C}_{net} \approx 2\pi\lambda_{b}\int_{0}^{\beta^{\frac{1}{\alpha}}}x_{11}\mathrm{e}^{\left(-\pi\lambda_{b}x_{11}^{\mu}\beta^{\frac{2-\mu}{\alpha}}\right)}\mathrm{e}^{\left(-\frac{x_{11}^{\alpha}\Theta}{\beta}\right)}\mathrm{e}^{\left(-\pi\lambda_{b}\mathbb{P}\left[\mathbb{A_{UE}}\right]x_{11}^2 \Theta^{\frac{2}{\alpha}}\int\displaylimits_{x_{11}^{\mu-2}\Theta^{\frac{-2}{\alpha}}\beta^{\frac{2-\mu}{\alpha}}}^{\infty}\frac{1}{1+z^{\alpha/2}}\mathrm{d}z\right)}\mathrm{d}x_{11}
\label{eff_single}.
\end{equation}

It is known that the integral inside the exponential function can be expressed in terms of the Gauss hypergeometric function for general values of the path-loss exponent $\alpha >2$ \cite{Geff_DL}. In order to better highlight the proposed approach, we assume $\alpha=4$ in the sequel. A similar approach can be applied for other values of $\alpha$. This generalization is left to the reader. By letting $\alpha=4$ and by using some algebraic manipulations, \eqref{eff_single} can be written as follows:
\begin{equation}
\mathcal{C}_{net} \approx 2\pi\lambda_{b}\int_{0}^{\beta^{\frac{1}{4}}}x_{11}\mathrm{e}^{\left(-\pi\lambda_{b}x_{11}^{\mu}\beta^{\frac{2-\mu}{4}}\right)}\mathrm{e}^{\left(-\frac{x_{11}^{4}\Theta}{\beta}\right)}\mathrm{e}^{\left(-\pi\lambda_{b}\mathbb{P}\left[\mathbb{A_{UE}}\right]x_{11}^2 \sqrt{\Theta}\left(\frac{\pi}{2}-\arctan\left(\frac{x_{11}^{\mu-2}\beta^{\frac{2-\mu}{4}}}{\sqrt{\Theta}}\right)\right)\right)}\mathrm{d}x_{11}
\label{eff_single_a4}.
\end{equation}

The term $\mathrm{e}^{\left(-\frac{x_{11}^{4}\Theta}{\beta}\right)}$ can be ignored, because when $\beta$ is dominant then $\mathrm{e}^{\left(-\frac{x_{11}^{4}\Theta}{\beta}\right)}\approx 1$ and when the $x_{11}^{4}\Theta$ dominates then the rest of the integral tends to $0$. Therefore, \eqref{eff_single_a4} simplifies as follows:
\begin{equation}
\mathcal{C}_{net} \approx 2\pi\lambda_{b}\int_{0}^{\beta^{\frac{1}{4}}}x_{11}\mathrm{e}^{\left(-\pi\lambda_{b}x_{11}^{\mu}\beta^{\frac{2-\mu}{4}}-\pi\lambda_{b}\mathbb{P}\left[\mathbb{A_{UE}}\right]x_{11}^2 \sqrt{\Theta}\left(\frac{\pi}{2}-\arctan\left(\frac{x_{11}^{\mu-2}\beta^{\frac{2-\mu}{4}}}{\sqrt{\Theta}}\right)\right)\right)}\mathrm{d}x_{11}
\label{eff_single_a4_N0}.
\end{equation}

In the next two sections, we further simplify \eqref{eff_single_a4_N0} by considering large and small values of $\Theta$, respectively.
\subsection{Large Values of $\Theta$}
By definition, $\mu \in [1,2]$. In addition, the function $\arctan\left(\cdot\right)$ becomes small for large values of $\Theta$, i.e., $\arctan\left(1/\Theta\right) \approx 0$. Thus, it can be ignored for large values of $\Theta$:
\begin{equation}
\mathcal{C}_{net} \approx 2\pi\lambda_{b}\int_{0}^{\beta^{\frac{1}{4}}}x_{11}\mathrm{e}^{\left(-\pi\lambda_{b}x_{11}^{\mu}\beta^{\frac{2-\mu}{4}}-\frac{\pi^2\lambda_{b}\mathbb{P}\left[\mathbb{A_{UE}}\right]x_{11}^2 \sqrt{\Theta}}{2}\right)}\mathrm{d}x_{11} = 2\pi\lambda_{b}\int_{0}^{\beta^{\frac{1}{4}}}x_{11}\mathrm{e}^{\left(-A_1 x_{11}^{\mu}-A_2x_{11}^2 \right)}\mathrm{d}x_{11}
\label{eff_single_a4_N0_atan0},
\end{equation}
where $A_1 = \pi \lambda_b \beta^{\frac{2-\mu}{4}}$ and $A_2= \frac{\pi^2\lambda_b \mathbb{P}\left[\mathbb{A_{UE}}\right] \sqrt{\Theta} }{2}$. 

The integral in \eqref{eff_single_a4_N0_atan0} cannot be formulated in a simple closed-form expression that is suitable to get insight for system optimization. To circumvent this issue, we express the exponential $\mathrm{e}^{\left(-A_1 x_{11}^u\right)}$ by using its power series representation:
\begin{equation}
\mathcal{C}_{net} \approx 2\pi\lambda_b \int_{0}^{\beta^{\frac{1}{4}}} x_{11}\left[1-\frac{A_1 x_{11}^{\mu}}{1!}+\frac{A_1^2 x_{11}^{2\mu}}{2!}-\frac{A_1^3 x_{11}^{3\mu}}{3!}+\frac{A_1^4 x_{11}^{4\mu}}{4!}+\cdots\right]\mathrm{e}^{\left(-A_2 x_{11}^2\right)}\mathrm{d}x_{11}.
\end{equation}

To further simplify the analysis, since the high SNR assumption is considered, the upper integration limit can be simplified as $\beta^{1/\alpha} \to \infty$. With the aid of this approximation, we obtain:
\begin{multline}
\mathcal{C}_{net} \approx 2\pi\lambda_b\left[ \int_{0}^{\infty} x_{11}\mathrm{e}^{\left(-A_2 x_{11}^2\right)}\mathrm{d}x_{11}-\frac{A_1 }{1!}\int_{0}^{\infty}x_{11}^{\mu+1}\mathrm{e}^{\left(-A_2 x_{11}^2\right)}\mathrm{d}x_{11}+\frac{A_1^2 }{2!}\int_{0}^{\infty}x_{11}^{2\mu+1}\mathrm{e}^{\left(-A_2 x_{11}^2\right)}\mathrm{d}x_{11} \right. \\ \left.
-\frac{A_1^3 }{3!}\int_{0}^{\infty}x_{11}^{3\mu+1}\mathrm{e}^{\left(-A_2 x_{11}^2\right)}\mathrm{d}x_{11}+\frac{A_1^4 }{4!}\int_{0}^{\infty}x_{11}^{4\mu+1}\mathrm{e}^{\left(-A_2 x_{11}^2\right)}\mathrm{d}x_{11}\cdots\right]\\
\approx \pi\lambda_b\left[\frac{\Gamma\left(1\right)}{A_2}-\frac{A_1 \, \Gamma\left(\frac{\mu+2}{2}\right)}{A_2^{\frac{\mu+2}{2}}}+\frac{A_1^2\,\Gamma\left(\frac{2\mu+2}{2}\right)}{2! A_2^{\frac{2\mu+2}{2}}}-\frac{A_1^3\,\Gamma\left(\frac{3\mu+2}{2}\right)}{3! A_2^{\frac{3\mu+2}{2}}}+\frac{A_1^4\,\Gamma\left(\frac{4\mu+2}{2}\right)}{4! A_2^{\frac{4\mu+2}{2}}}+ \cdots \right]
\label{power_series_ext_gamma1},
\end{multline}
where the last expression is obtained by solving each integral and writing it in terms of the Gamma function, i.e., $\intop_{0}^{\infty}x^m\mathrm{e}^{-\zeta x^n}\mathrm{d}x=\frac{\Gamma\left(\gamma\right)}{n\zeta^{\gamma}}$ where $\gamma=\frac{m+1}{n}$ \cite{book_integral}. 

%

Equation \eqref{power_series_ext_gamma1} can be formulated, in a more compact form, as follows:
\begin{equation}
\mathcal{C}_{net} \approx \frac{\pi\lambda_b}{A_2}\sum_{n=0}^{\infty} \frac{\left(-1\right)^{n} R^{\frac{n}{2}} \Gamma\left(\frac{n\mu+2}{2}\right)}{n!}
\label{compact_1_series},
\end{equation}
where $R=A_1^2/A_2^{\mu}$.

It is not possible, to the best of our knowledge, to compute the explicit result of the series for arbitrary values of $\mu$. This is possible, however, for the special case $\mu=2$:
\begin{equation}
\mathcal{C}_{net} \approx  = \frac{\pi \lambda_b}{A_1 + A_2}\nonumber
\label{convergance_mu2}.
\end{equation}

In Fig. \ref{R_function}, we plot $R$ as a function of $\mu$. We observe that, for constant values of $\lambda_b$ and $\beta$, $R$ decreases if $\mu$ increases. It is difficult, however, to find the analytical expression of the optimal value of $\mu$ that maximizes $\mathcal{C}_{net}$ in \eqref{compact_1_series}. To circumvent this issue, we have performed extensive Monte Carlo simulations and found that ${C}_{net}$ is maximized by the value of $\mu$ that corresponds to $R=1$.
\begin{figure}
	\centering
		\includegraphics[scale=0.7]{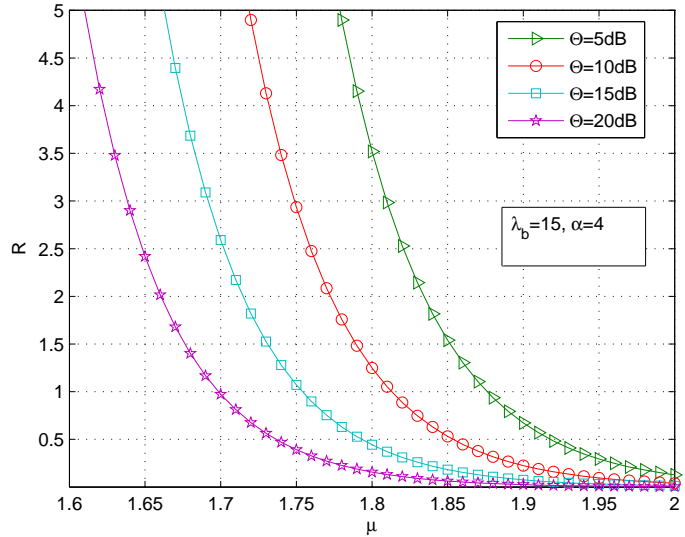}
	\caption{$R$ as a function of $\mu$.}
	\label{R_function}
	\vspace{-1.7em}
\end{figure}

As a result, the optimal value of $\mu$ that maximizes ${C}_{net}$ is the unique solution of the following equation $R=1$:
\begin{equation}
A_1^2-A_2^{\mu}=\left(\pi^2 \lambda_b^2 \beta^{\frac{2-\mu}{2}}\right)-\left(\frac{\pi^2\lambda_b \mathbb{P}\left[\mathbb{A_{UE}}\right] \sqrt{\Theta} }{2}\right)^{\mu}=0.
\label{optimization_simple}
\end{equation}

To compute the optimal value of $\mu$, a tractable expression of $\mathbb{P}\left[\mathbb{A_{UE}}\right]$ is needed, which is itself a function of $\mu$. From \eqref{TIN_prob_received}, $\mathbb{P}\left[\mathbb{A_{UE}}\right]$ can be computed as follows:
\begin{equation}
\mathbb{P}\left[\mathbb{A_{UE}}\right] =2\left(\pi\lambda_{b}\right)^2\int_{0}^{\infty}x_{21}\mathrm{e}^{-\pi\lambda_{b}x_{21}^2}{\min^2} \left(x_{21},\beta^{\frac{\mu-2}{4\mu}}x_{21}^{\frac{2}{\mu}}\right)\mathrm{d}x_{21}\approx \frac{2\, \Gamma\left(\frac{2}{u}\right)}{\mu \left(\pi\lambda_b\right)^{\frac{2}{\mu}-1}\beta^{\frac{1}{\mu}-\frac{1}{2}}}
\label{prob_TIN_approx},
\end{equation}
where the approximation is obtained by using similar approximations as those used for computing ${C}_{net}$ in the high SNR regime.

By inserting \eqref{prob_TIN_approx} in \eqref{optimization_simple} and with the aid of some algebraic manipulations, we obtain:
\begin{equation}
\mu^{\mu}\left(\pi \lambda_b\right)^4\beta^{2-\mu}-\left(\pi^3\lambda_b^2 \sqrt{\Theta}\,  \Gamma\left(\frac{2}{\mu}\right)\right)^{\mu}=0.
\label{optimization_simple_final}
\end{equation}

The optimization problem in \eqref{optimization_simple_final} is much simpler to solve than \eqref{eff_coverage}. For example, it can be easily solved by using the \texttt{fzero} function in Matlab. By direct inspection of \eqref{optimization_simple_final}, in addition, the following conclusions can be drawn. The minuend term, $\mu^{\mu}\left(\pi \lambda_b\right)^4\beta^{2-\mu}$, is independent of $\Theta$, and the subtrahend term, $\left(\pi^3\lambda_b^2 \sqrt{\Theta}\,  \Gamma\left(\frac{2}{\mu}\right)\right)^{\mu}$, is dependent on $\Theta$. In the minuend, the term $\mu^{\mu}$ increases as $\mu$ increases from $1$ to $2$, but the term $\beta^{2-\mu}$ decreases very rapidly with the same increase of $\mu$. This suggests that the minuend is a decreasing function of $\mu$. For fixed and realistic values of $\lambda_b$ and $\Theta$, the subtrahend decreases if $\mu$ increases. We plot both the minuend and subtrahend terms in Fig. \ref{splitting_optimization}. It can be observed that, for the given parameters, both the minuend and subtrahend decrease if $\mu$ increases. The minuend and subtrahend terms, however, cross each other in exactly one point, which guarantees that there is a unique optimal value of $\mu$ that maximizes the effective coverage probability.

\begin{figure}
	\centering
		\includegraphics[scale=0.7]{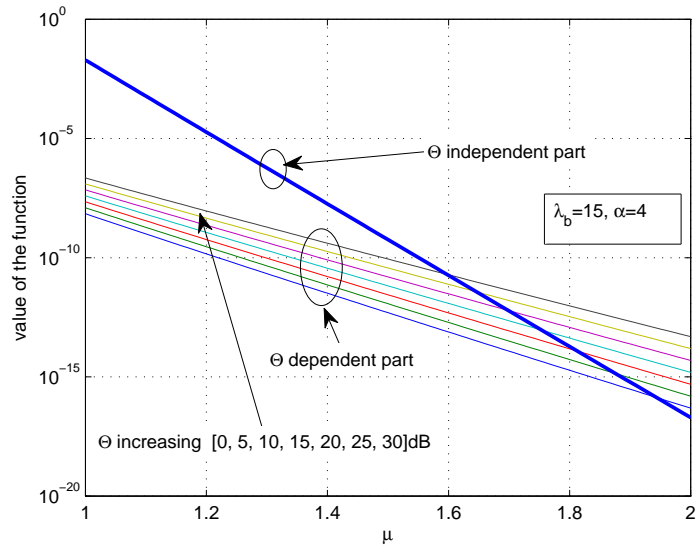}
	\caption{Understanding the optimization problem \eqref{optimization_simple_final}.}
	\label{splitting_optimization}
	\vspace{-1.7em}
\end{figure}
\begin{figure*}
\centering
			\begin{subfigure}{0.45\textwidth}
			\centering
			\includegraphics[width=1\linewidth]{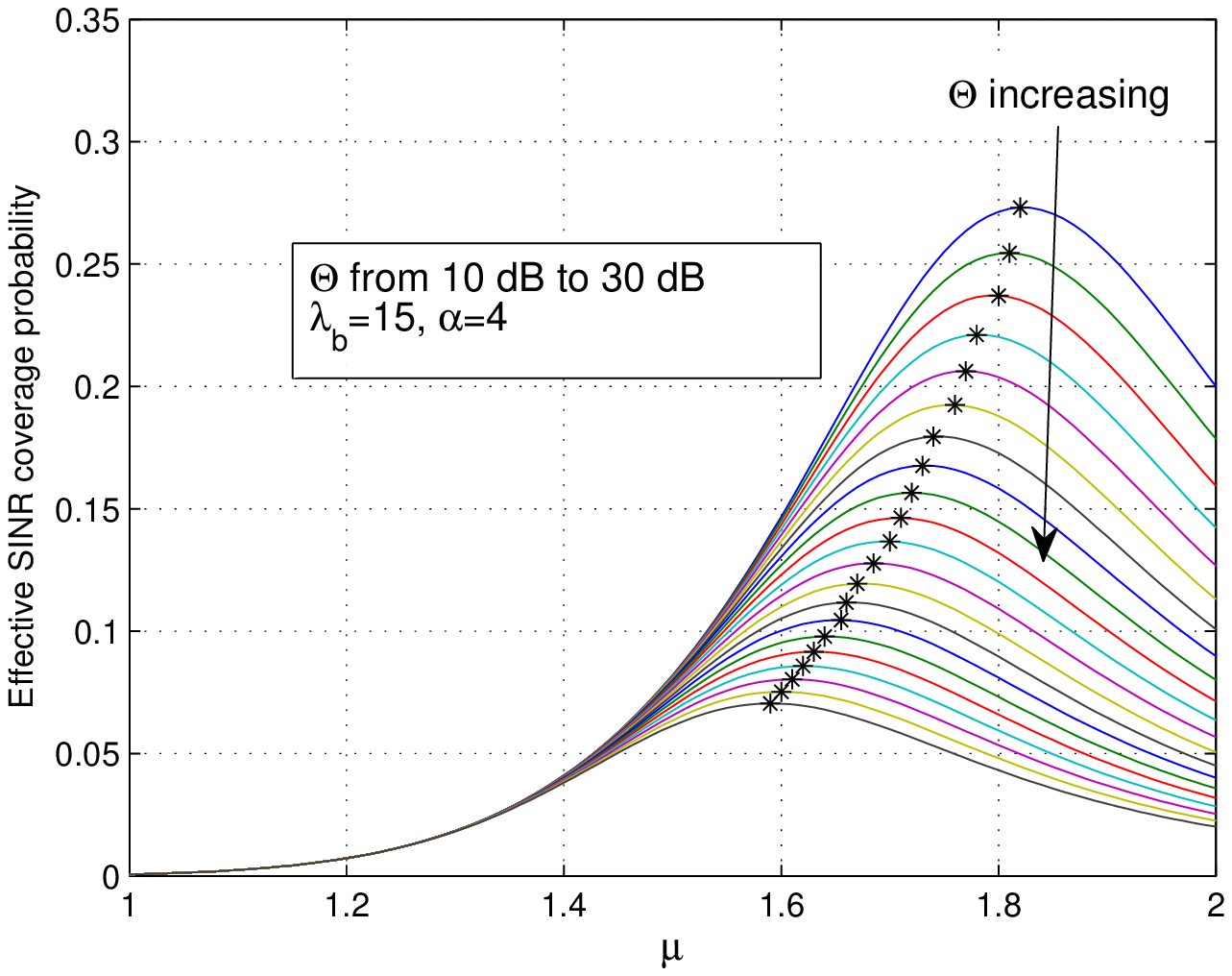}
			\caption{Effective SINR coverage probability.}
			\label{net_cov}
		\end{subfigure}
	\begin{subfigure}{0.45\textwidth}
		\centering
		\includegraphics[width=1\linewidth]{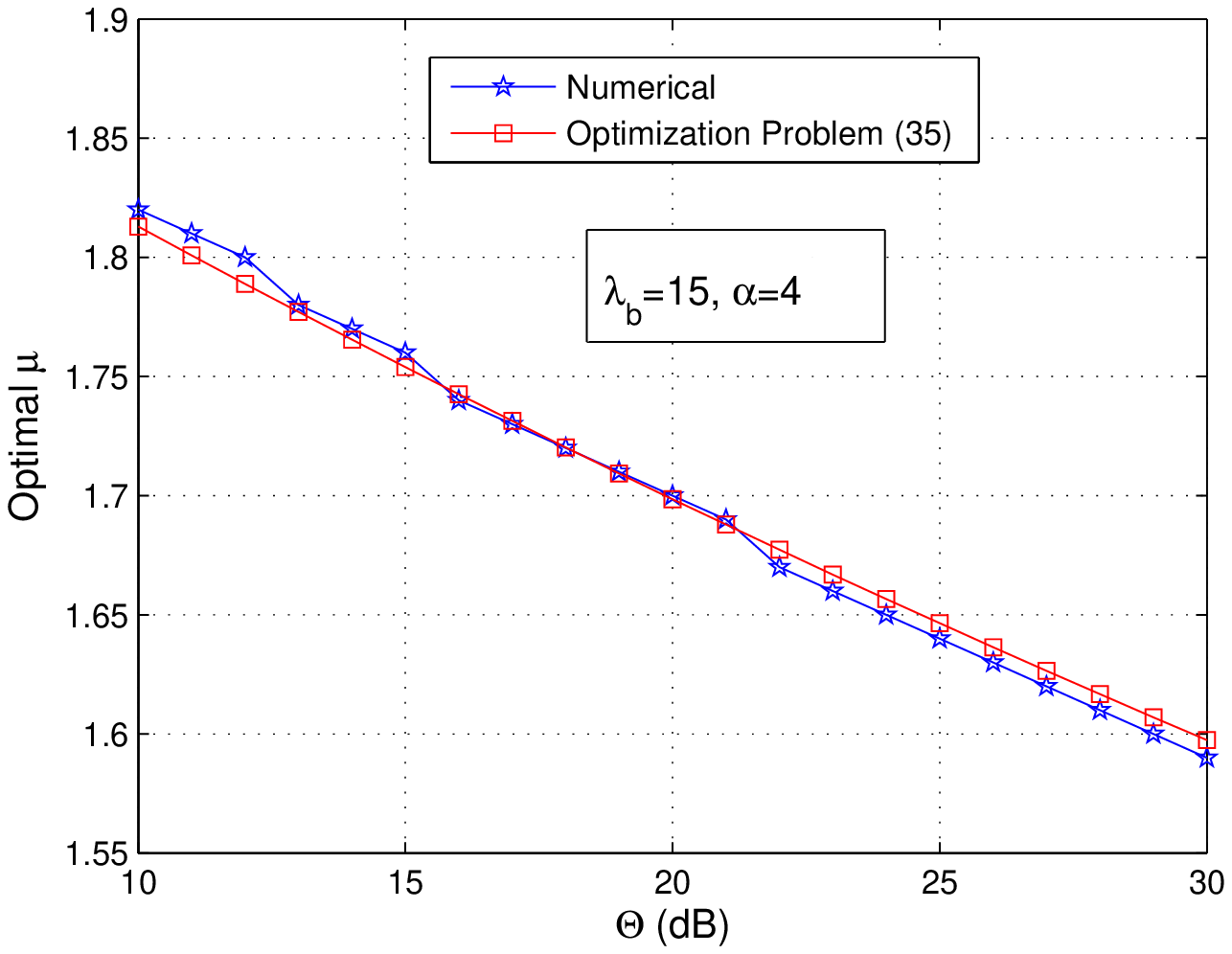}
		 \caption{Optimal $\mu$.}
		 \label{optimal_mu}
	\end{subfigure}\hfill
\caption{Optimal $\mu$ for large SINR threshold $\Theta$.}
		\label{optimal_mu_large_theta}
\end{figure*}

In Fig. \ref{optimal_mu_large_theta}, we show that, for each value of $\Theta$, an optimal value of $\mu$ exists. In addition, the accuracy of the solution of \eqref{optimization_simple_final} compared with the exact values of $\mu$ that maximizes \eqref{eff_coverage} is studied. A good accuracy is obtained. Fig. \ref{net_cov}, in particular, shows that the effective coverage probability is maximized for values of $\mu$ smaller than $2$, which correspond to a conventional cellular network where TIN is not applied. 
\begin{figure}
	\centering
		\includegraphics[scale=0.5]{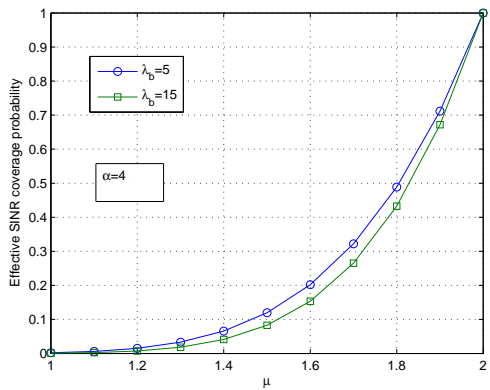}
	\caption{Optimal $\mu$ when $\Theta$ is very small.}
	\label{optimal_mu_small_theta}
		\vspace{-2.7em}
\end{figure}
\subsection{Small Values of $\Theta$}
In this section, we study the existence of optimal values of $\mu$ for small values of the SINR threshold. If $\Theta$ is small, i.e. $\Theta\approx0$, we have $\arctan\left(\frac{x_{11}^{\mu-2}\beta^{\frac{2-\mu}{4}}}{\sqrt{\Theta}}\right)=\arctan\left(\infty\right)=\frac{\pi}{2}$, and \eqref{eff_single_a4_N0} reduces to:
\begin{equation}
\mathcal{C}_{net} \approx 2\pi\lambda_{b}\int_{0}^{\infty}x_{11}\mathrm{e}^{\left(-\pi\lambda_{b}x_{11}^{\mu}\beta^{\frac{2-\mu}{4}}\right)}\mathrm{d}x_{11} \approx \frac{2\, \Gamma\left(\frac{2}{u}\right)}{\mu \left(\pi\lambda_b\right)^{\frac{2}{\mu}-1}\beta^{\frac{1}{\mu}-\frac{1}{2}}}
\label{eff_single_a4_N0_small_theta}.
\end{equation}

By direct inspection of \eqref{eff_single_a4_N0_small_theta}, we evince that $\mathcal{C}_{net}$ increases if $\mu$ increases. In particular, $\mathcal{C}_{net}=1$ when $\mu=2$. This finding suggests that there is no need to turn any BSs off if the SINR threshold is small. For small values of $\Theta$, in other words, there is no need to apply TIN. Figure \ref{optimal_mu_small_theta} confirms this conclusion. 
\section{Results and Discussion}
\label{results}
In this section, we illustrate some numerical and simulation results. We emphasize that the Monte Carlo simulation results are generated without any approximations or assumptions that are exploited to obtain the analytical frameworks. As for the simulation setup, we set $P=46$ dBm and $N=-110$ dBm/Hz. 

In Fig. \ref{prob_TIN}, we plot the probability of TIN against $\mu$. The figure shows that the probability of TIN increases with $\mu$. It can be observed that the probability of TIN is zero if $\mu=1$ and tends to one if $M=10$ and $\mu=1.9$. Similarly, the probability tends to one if $M=1$ and $\mu=2$. We remind the reader that no BSs are turned off if the probability of TIN is equal to one. In other words, every BS schedules a UE in any given time slot, as in conventional cellular networks. This illustrates that both $M$ and $\mu$ are tunable parameters that control the number of UEs to be scheduled.
For simplicity, and without loss of generality, we keep $M=1$ and vary $\mu$ in the rest of the results.

\begin{figure*}
\centering
			\begin{subfigure}{0.45\textwidth}
			\centering
			\includegraphics[width=1\linewidth]{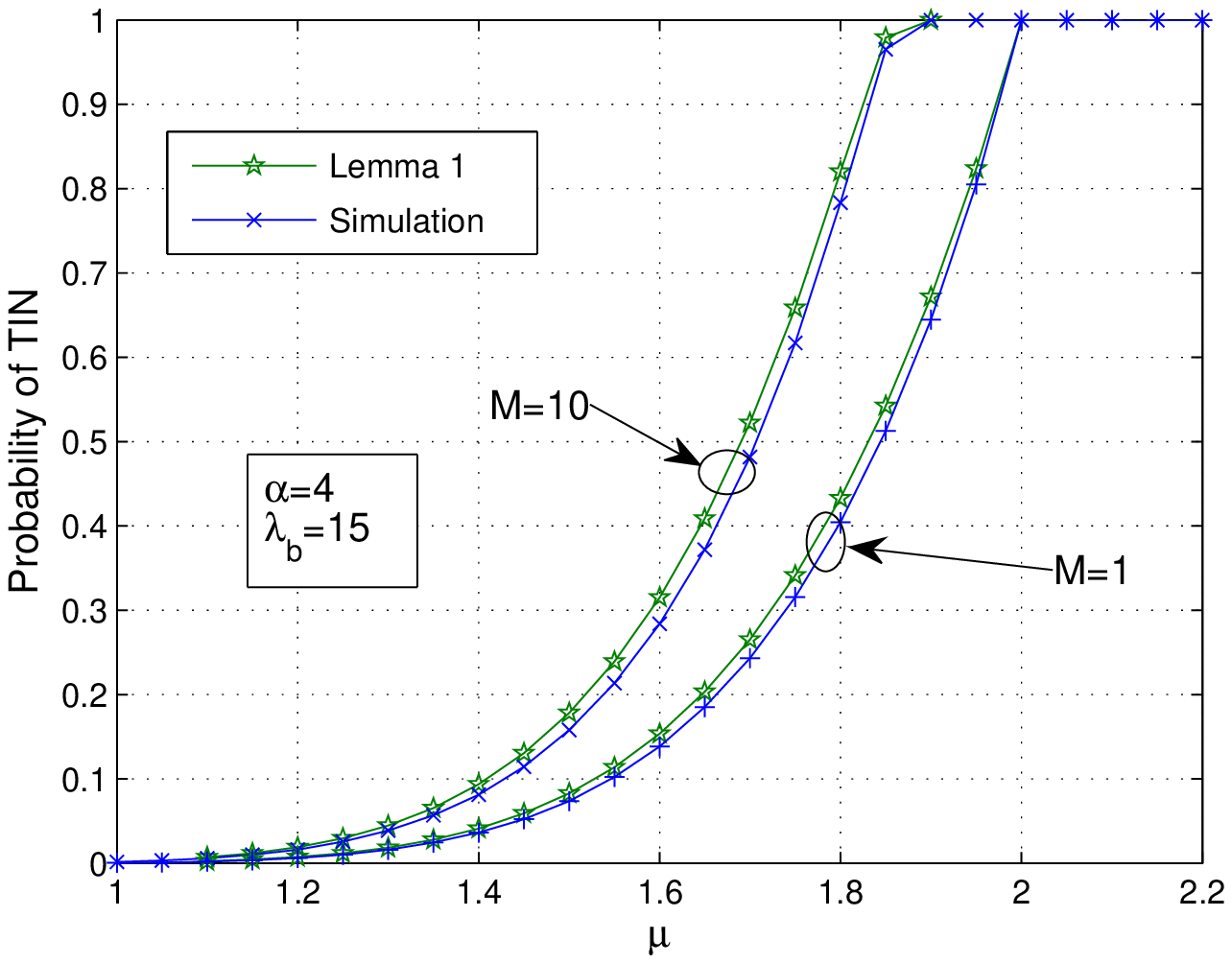}
			\caption{}
			\label{}
		\end{subfigure}
	\begin{subfigure}{0.45\textwidth}
		\centering
		\includegraphics[width=1\linewidth]{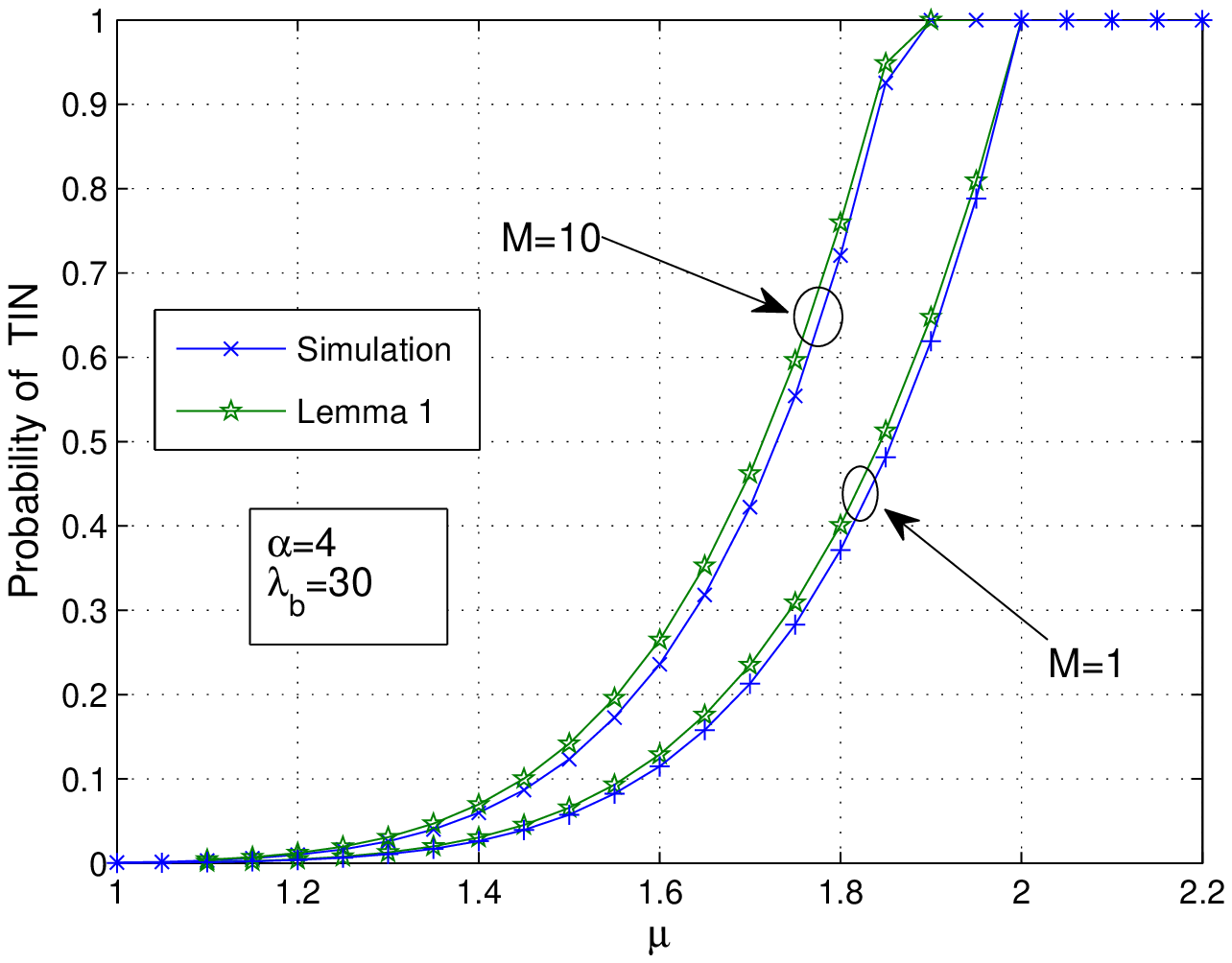}
		 \caption{}
		 \label{}
	\end{subfigure}\hfill
\caption{Probability of TIN versus $\mu$.}
		\label{prob_TIN}
\end{figure*}
\begin{figure}
	\centering
		\includegraphics[scale=0.65]{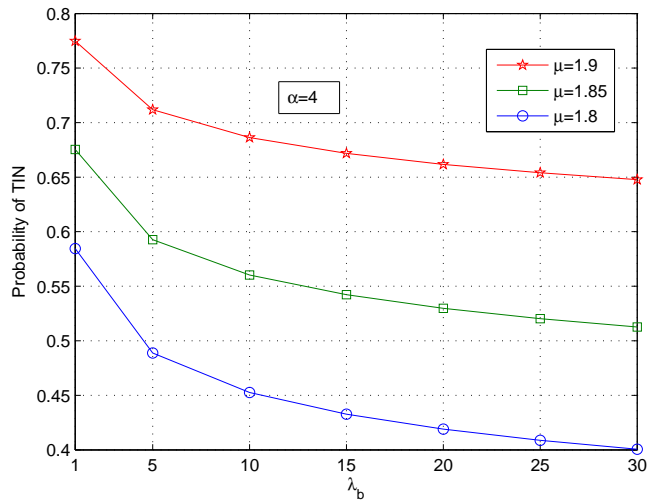}
	\caption{Probability of TIN versus $\lambda_b$.}
	\label{prob_TIN_vs_lambda_b}
	\vspace{-1.7em}
\end{figure}

In Fig. \ref{prob_TIN_vs_lambda_b}, we plot the probability of TIN versus the density of the BS. It can be observed that, for a given value of $\mu$, the probability of TIN decreases if $\lambda_b$ increases. This result shows that, if we increase $\lambda_b$, the probability that the UEs satisfy the TIN criterion in \eqref{TIN_cell_received} reduces. This is due to the increase of the amount of interference in the network.
\begin{figure}
	\centering
		\includegraphics[scale=0.65]{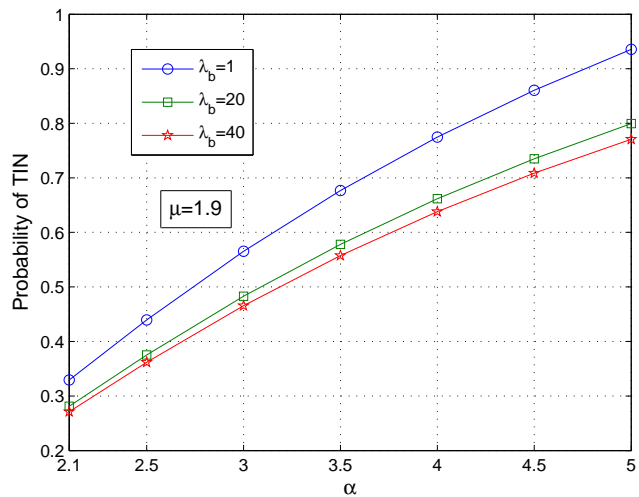}
	\caption{Probability of TIN versus $\alpha$.}
	\label{prob_TIN_vs_pathloss}
	\vspace{-1.7em}
\end{figure}

Fig. \ref{prob_TIN_vs_pathloss} shows the effect of the path-loss exponent on the probability of TIN. It can be observed that the probability of TIN increases if $\alpha$ increases. If the path-loss exponent is large, the interference received at a UE is low, and, therefore, the probability that a UE satisfies the TIN criterion increases \eqref{TIN_cell_received}.

Fig. \ref{sinr_cov} compares the simulation and analytical curves of the SINR coverage probability. We remind the reader that the first simulation curve is obtained if the TIN-based scheduling is based on \eqref{TIN_cell} and the second simulation curve is obtained if the TIN-based scheduling is based on \eqref{TIN_cell_received}. The curve corresponding to the classical scheduling is obtained by not turning any BSs off.
The figure highlights that the analytical and simulation curves are quite close to each other, which substantiates the accuracy of our analysis. The small gap between the two curves originates from the approximations discussed in Section \ref{section_approximation}.
\begin{figure*}
\centering
			\begin{subfigure}{0.45\textwidth}
			\centering
			\includegraphics[width=1\linewidth]{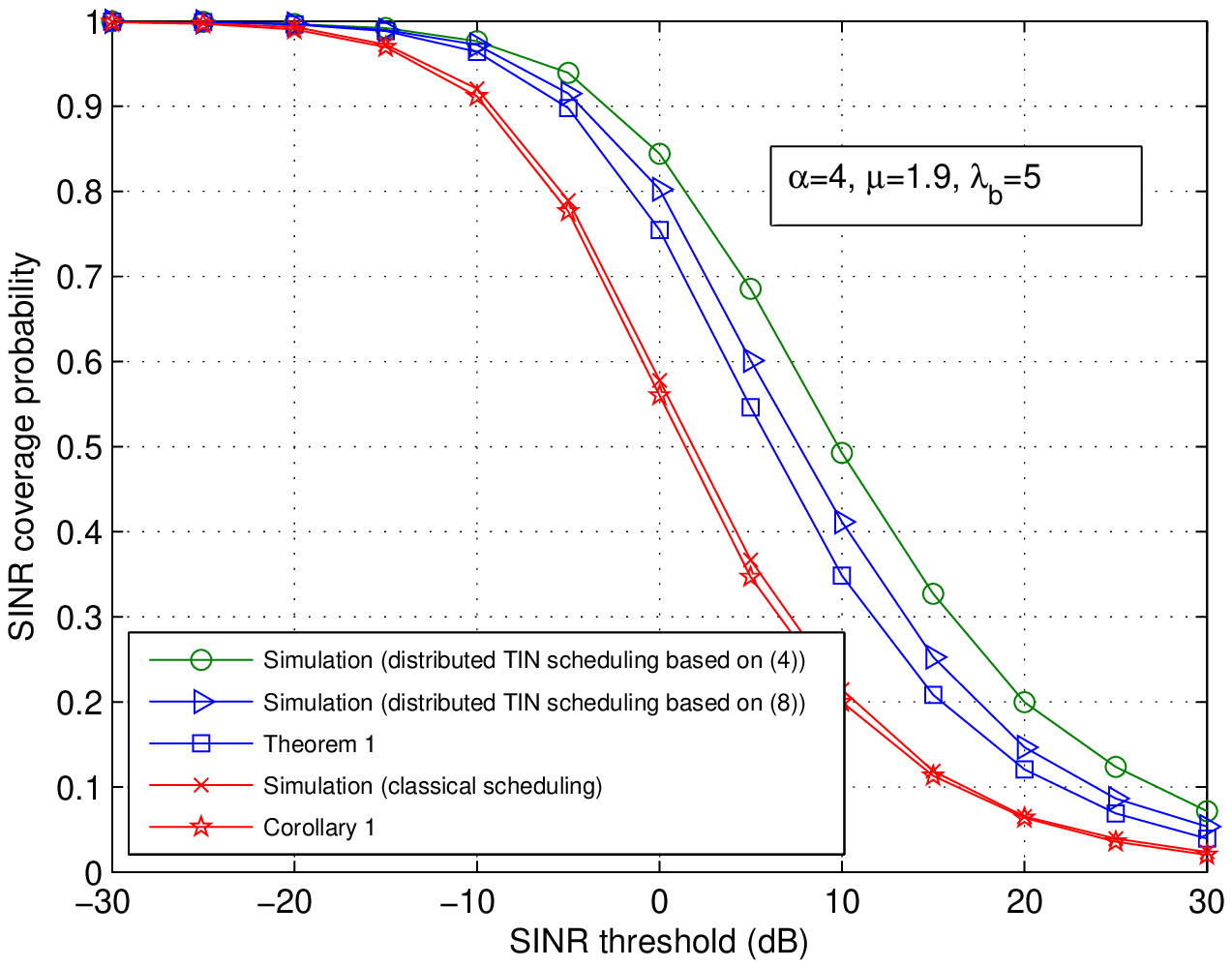}
			\caption{}
			\label{}
		\end{subfigure}
	\begin{subfigure}{0.45\textwidth}
		\centering
		\includegraphics[width=1\linewidth]{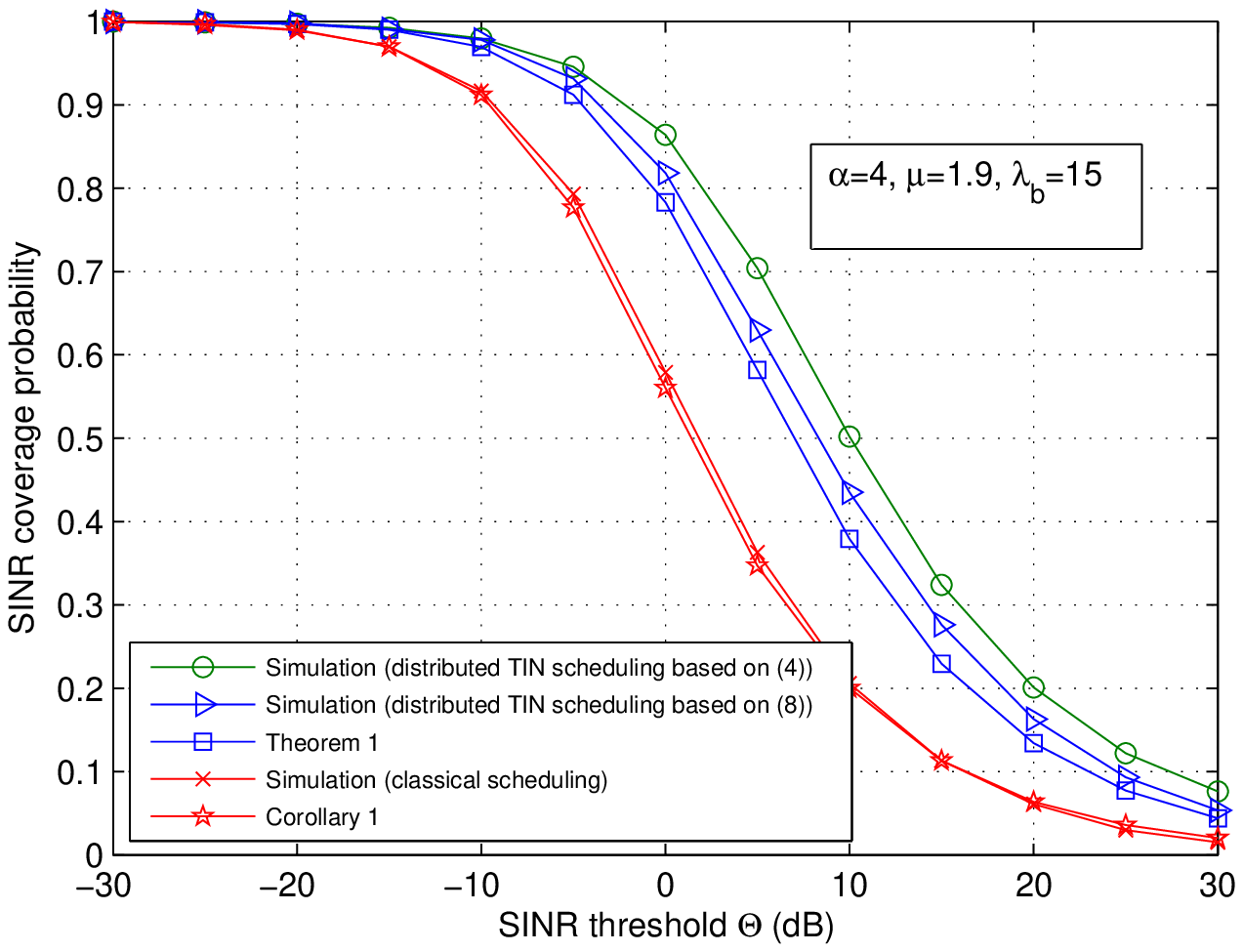}
		 \caption{}
		 \label{}
	\end{subfigure}\hfill
\caption{SINR coverage probability.}
		\label{sinr_cov}
\end{figure*}
\begin{figure}
	\centering
		\includegraphics[scale=0.7]{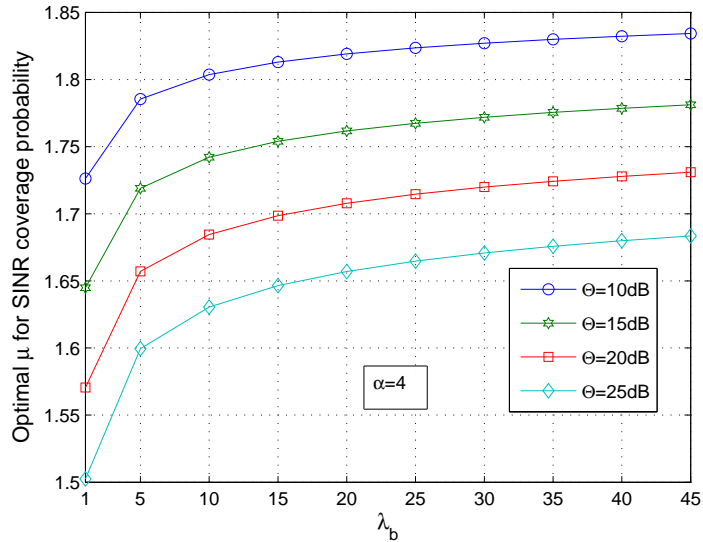}
	\caption{Optimal $\mu$ for the SINR coverage probability.}
	\label{optimal_mu_for_SINR_coverage}
\end{figure}

Fig. \ref{optimal_mu_for_SINR_coverage} illustrates the effect of $\lambda_b$ and $\Theta$ on the optimal value of $\mu$ that maximizes the SINR coverage probability. The optimal value of $\mu$ is obtained as the solution of \eqref{optimization_simple_final}. The figure highlights that the optimal value of $\mu$ decreases if the SINR threshold increases. This shows that, for large values of the SINR threshold, more BSs need to be turned off to maximize the effective coverage probability. Furthermore, it can be observed that the optimal value of $\mu$ increases if the density of the BSs increases. To understand this effect, we need to consider the effect of $\lambda_b$ on the probability of TIN. We have seen in Fig. \ref{prob_TIN_vs_lambda_b} that the probability of TIN decreases if the density of the BS increases. Our optimization problem not only maximizes the SINR coverage probability, but, at the same time, tries to turn the smallest number of BSs off. Therefore, the optimal value of $\mu$ increases by increasing the density of the BS.
\begin{figure}
	\centering
		\includegraphics[scale=0.7]{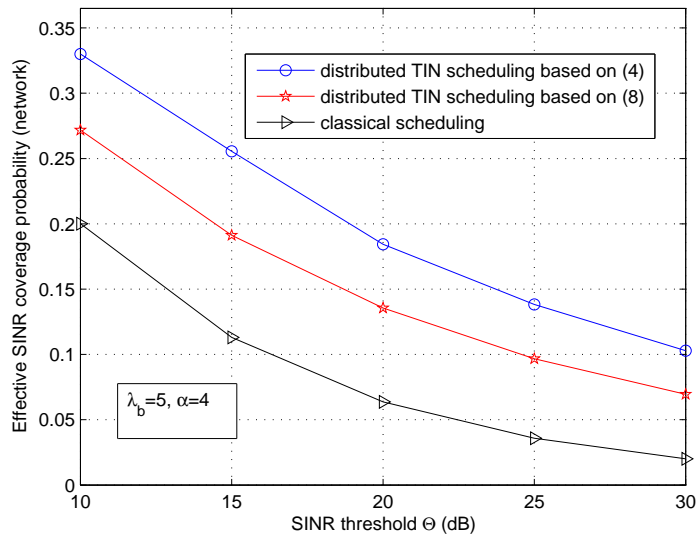}
	\caption{SINR coverage probability gain.}
	\label{gain_sinr}
\end{figure}

Fig. \ref{gain_sinr} shows the SINR coverage probability gain provided by our proposed distributed TIN-based scheduling scheme compared with the conventional scheduling scheme where no BSs are turned off. The optimal value of $\mu$ is obtained from \eqref{optimization_simple_final} if the distributed TIN scheduling algorithm is based on \eqref{TIN_cell_received}. It is obtained through simulations, on the other hand, if the distributed TIN scheduling algorithm is based on \eqref{TIN_cell}. It can be observed that the improvement provided by the distributed TIN scheduling scheme over the conventional scheduling scheme changes with the SINR threshold.
Specifically, it can be noticed that, if $\Theta=10$dB, the improvement is $67\%$ when the distributed TIN scheduling is based on \eqref{TIN_cell}, and $36\%$ when the distributed TIN scheduling is based on \eqref{TIN_cell_received}.
\begin{figure}
	\centering
		\includegraphics[scale=0.7]{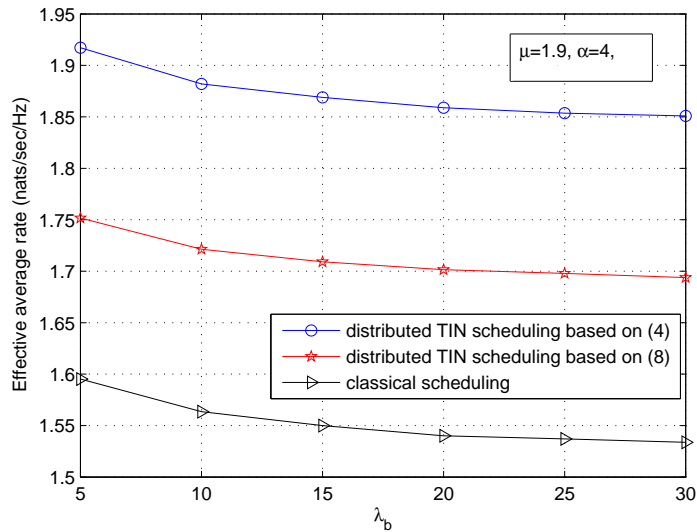}
	\caption{Average rate gain.}
	\label{gain_se}
\end{figure}

In Fig. \ref{gain_se}, we depict the effective average rate against the density of the BSs. The curves of the distributed TIN scheduling are obtained by setting $\mu=1.9$. It can be observed that the distributed TIN scheduling based on \eqref{TIN_cell} improves the average rate by 21$\%$, whereas the distributed TIN scheduling based on \eqref{TIN_cell_received} improves the average rate by 11$\%$. Furthermore, it can be observed that the gain remains constant for various values of $\lambda_b$. From Fig. \ref{gain_sinr} and \ref{gain_se}, it can be concluded that a simple distributed TIN-based scheduling algorithm significantly enhances the SINR effective coverage probability and the effective average rate.
%
%
\section{Conclusion}
\label{conclusion}
In this paper, we have proposed a simple scheduling algorithm for application to cellular networks that is based on the TIN optimality condition. The original form of the scheduling algorithm is shown not to be mathematically tractable. To overcome this issue, we have proposed a simplified analytical framework to estimate the SINR effective coverage probability and the effective average rate by using stochastic geometry tools. To enable a simple optimization of the system parameters, we have developed simplified analytical frameworks in the high SNR regime, and for small and large values of the SINR decoding threshold. By optimizing the system with the aid of the proposed analytical frameworks, it is shown that the proposed TIN-based scheduling algorithm outperforms conventional cellular networks in terms of effective coverage probability and effective average rate.

Interesting future works include the analysis of the trade-off between average rate and user fairness, as well as the energy efficiency gain provided by the proposed scheduling algorithm.
\appendices
\section{Proof of Theorem 1}
\label{proof_thm1}
The SINR coverage probability of the typical UE given that it satisfies the TIN optimality conditions can be expressed as follows:
\begin{equation}
\begin{split}
\mathcal{C} &= \int_0^\infty \mathbb{P}\left[\mathrm{SINR}\geq\Theta\right]f_{X_{11}}\left(x_{11}|\mathbb{A_{UE}}\right)\mathrm{d}x_{11} \\
& \overset{(a)}=\frac{2\pi\lambda_{b}}{\mathbb{P\left[A_{UE}\right]}}\int_0^\infty \mathbb{P}\left[\frac{h_{\mathrm{11}}x_{\mathrm{11}}^{-\alpha}}{I+\frac{N}{P}}\geq\Theta\right]x_{11}
\mathrm{e}^{-\pi\lambda_{b}\max^{2}\left(x_{11},x_{11}^{\mu/2}\left(\frac{P}{N}\right)^{\frac{2-\mu}{2\alpha}}\left(\frac{1}{M}\right)^{\frac{1}{2\alpha}}\right)}\mathrm{d}x_{11} \\
&\overset{(b)}=\frac{2\pi\lambda_{b}}{\mathbb{P\left[A_{UE}\right]}}\int_0^\infty \mathrm{e}^{-x_{11}^{\alpha}\Theta \frac{N}{P}} \mathbb{E}_I \left[\mathrm{e}^{-x_{11}^\alpha\Theta I}\right] x_{11}\mathrm{e}^{-\pi\lambda_{b}\max^{2}\left(x_{11},x_{11}^{\frac{\mu}{2}}\left(\frac{P}{N}\right)^{\frac{2-\mu}{2\alpha}}\left(\frac{1}{M}\right)^{\frac{1}{2\alpha}}\right)}\mathrm{d}x_{11},
\end{split}
\end{equation}
where $I$ denotes the other-cell interference, $\left(a\right)$ follows by using $f_{X_{11}}\left(x_{11}|\mathbb{A_{UE}}\right)$ in \eqref{pdf_X11}, $\left(b\right)$ follows because $h_{11}\sim \exp\left(1\right)$ is an exponential random variable, and $\mathbb{E}_I \left[\mathrm{e}^{-x_{11}^\alpha\Theta I}\right]=\mathcal{L}_I\left(x_{11}^\alpha\Theta\right)$ is the Laplace transform of the other-cell interference $I$.

Let us define $s=x_{11}^\alpha\Theta$. The Laplace transform $\mathcal{L}_I\left(s\right)$ can be written as follows:
\begin{equation}
\begin{split}
\mathcal{L}_I\left(s\right)&=\mathbb{E}_I \left[\mathrm{e}^{-sI}\right] =\mathbb{E}_{{h_i}, D_i} \left[\mathrm{e}^{-s\sum_{i\in\Phi_{b}^{'}}h_{i}D_{i}^{-\alpha}}\right]
\\ & =\mathbb{E}_{D_i} \prod_{i\in\Phi_{b}^{'}} \mathbb{E}_{h_i} \left[\mathrm{e}^{-s h_i D_i^{-\alpha}}\right]
\overset{(c)}= \mathbb{E}_{D_i} \prod_{i\in\Phi_{b}^{'}} \frac{1}{1+sD_i^{-\alpha}}
\\ & \overset{(d)}=\exp\left(-2\pi\lambda_b \mathbb{P\left[A_{UE}\right]} \int_{\max\left(x_{11},x_{11}^{\mu/2}\left(\frac{P}{N}\right)^{\frac{2-\mu}{2\alpha}}\left(\frac{1}{M}\right)^{\frac{1}{2\alpha}}\right)}^\infty \left(1-\frac{1}{1+su^{-\alpha}}\right)u\mathrm{d}u\right)\\
& =\exp\left(-\pi\lambda_b \mathbb{P\left[A_{UE}\right]} s^{\frac{2}{\alpha}} \int_{s^{\frac{-2}{\alpha}}\max^2\left(x_{11},x_{11}^{\mu/2}\left(\frac{P}{N}\right)^{\frac{2-\mu}{2\alpha}}\left(\frac{1}{M}\right)^{\frac{1}{2\alpha}}\right)}^\infty \frac{1}{1+z^{\alpha/2}}\mathrm{d}z\right),
\label{laplace1_sinr}
\end{split}
\end{equation}
where $\left(c\right)$ follows by computing the expectation with respect to $h_i$, and $\left(d\right)$ follows from the probability generating functional theorem of PPPs \cite{Geff_DL} by assuming that the point process of interfering BSs is an inhomogeneous PPP whose density is given in \eqref{den_interf_BS}. The proof follows with the aid of simple algebraic manipulations.
\ifCLASSOPTIONcaptionsoff
  \newpage
\fi
\renewcommand\refname{References} 
\bibliographystyle{IEEEtran}
\bibliography{IEEEabrv,Reference}

\begin{thebibliography}{10}
\providecommand{\url}[1]{#1}
\csname url@samestyle\endcsname
\providecommand{\newblock}{\relax}
\providecommand{\bibinfo}[2]{#2}
\providecommand{\BIBentrySTDinterwordspacing}{\spaceskip=0pt\relax}
\providecommand{\BIBentryALTinterwordstretchfactor}{4}
\providecommand{\BIBentryALTinterwordspacing}{\spaceskip=\fontdimen2\font plus
\BIBentryALTinterwordstretchfactor\fontdimen3\font minus
  \fontdimen4\font\relax}
\providecommand{\BIBforeignlanguage}[2]{{%
\expandafter\ifx\csname l@#1\endcsname\relax
\typeout{** WARNING: IEEEtran.bst: No hyphenation pattern has been}%
\typeout{** loaded for the language `#1'. Using the pattern for}%
\typeout{** the default language instead.}%
\else
\language=\csname l@#1\endcsname
\fi
#2}}
\providecommand{\BIBdecl}{\relax}
\BIBdecl

\bibitem{CoMP_magzine}
D.~Lee, H.~Seo, B.~Clerckx, E.~Hardouin, D.~Mazzarese, S.~Nagata, and
  K.~Sayana, ``Coordinated multipoint transmission and reception in
  lte-advanced: deployment scenarios and operational challenges,'' \emph{IEEE
  Communications Magazine}, vol.~50, no.~2, pp. 148--155, February 2012.

\bibitem{eICIC_ABS}
D.~Lopez-Perez, I.~Guvenc, G.~de~la Roche, M.~Kountouris, T.~Q.~S. Quek, and
  J.~Zhang, ``Enhanced intercell interference coordination challenges in
  heterogeneous networks,'' \emph{IEEE Wireless Communications}, vol.~18,
  no.~3, pp. 22--30, June 2011.

\bibitem{CoMP_IEEE_proc_comn}
G.~J. Foschini, K.~Karakayali, and R.~A. Valenzuela, ``Coordinating multiple
  antenna cellular networks to achieve enormous spectral efficiency,''
  \emph{IEE Proceedings - Communications}, vol. 153, no.~4, pp. 548--555,
  August 2006.

\bibitem{on_off_muting}
Q.~Ye, M.~Al-Shalashy, C.~Caramanis, and J.~G. Andrews, ``On/off macrocells and
  load balancing in heterogeneous cellular networks,'' in \emph{Global
  Communications Conference (GLOBECOM), 2013 IEEE}.\hskip 1em plus 0.5em minus
  0.4em\relax IEEE, 2013, pp. 3814--3819.

\bibitem{LTE_FPC}
E.~U. T.~R. Access, ``Physical layer procedures, 3gpp std,'' \emph{TS},
  vol.~36, p.~V9, 2013.

\bibitem{FPC_VTC}
C.~U. Castellanos, D.~L. Villa, C.~Rosa, K.~I. Pedersen, F.~D. Calabrese, P.~H.
  Michaelsen, and J.~Michel, ``Performance of uplink fractional power control
  in utran lte,'' in \emph{VTC Spring 2008 - IEEE Vehicular Technology
  Conference}, May 2008, pp. 2517--2521.

\bibitem{FPC_2}
H.~Zhang, N.~Prasad, S.~Rangarajan, S.~Mekhail, S.~Said, and R.~Arnott,
  ``Standards-compliant lte and lte-a uplink power control,'' in
  \emph{Communications (ICC), 2012 IEEE International Conference on}.\hskip 1em
  plus 0.5em minus 0.4em\relax IEEE, 2012, pp. 5275--5279.

\bibitem{optimality_TIN}
C.~Geng, N.~Naderializadeh, A.~S. Avestimehr, and S.~A. Jafar, ``On the
  optimality of treating interference as noise,'' \emph{IEEE Transactions on
  Information Theory}, vol.~61, no.~4, pp. 1753--1767, April 2015.

\bibitem{TIN_optimality1}
A.~S. Motahari and A.~K. Khandani, ``Capacity bounds for the gaussian
  interference channel,'' \emph{IEEE Transactions on Information Theory},
  vol.~55, no.~2, pp. 620--643, 2009.

\bibitem{TIN_optimality2}
V.~S. Annapureddy and V.~V. Veeravalli, ``Gaussian interference networks: Sum
  capacity in the low-interference regime and new outer bounds on the capacity
  region,'' \emph{IEEE Transactions on Information Theory}, vol.~55, no.~7, pp.
  3032--3050, 2009.

\bibitem{TIN_simplicity1}
M.~A. Charafeddine, A.~Sezgin, Z.~Han, and A.~Paulraj, ``Achievable and
  crystallized rate regions of the interference channel with interference as
  noise,'' \emph{IEEE Transactions on Wireless Communications}, vol.~11, no.~3,
  pp. 1100--1111, 2012.

\bibitem{Itlinq}
N.~Naderializadeh and A.~S. Avestimehr, ``Itlinq: A new approach for spectrum
  sharing in device-to-device communication systems,'' \emph{IEEE Journal on
  Selected Areas in Communications}, vol.~32, no.~6, pp. 1139--1151, June 2014.

\bibitem{itlinq_plus}
X.~Yi and G.~Caire, ``Optimality of treating interference as noise: A
  combinatorial perspective,'' \emph{IEEE Transactions on Information Theory},
  vol.~62, no.~8, pp. 4654--4673, 2016.

\bibitem{SG_itlinq}
R.~K. Mungara, X.~Zhang, A.~Lozano, and R.~W. Heath, ``Analytical
  characterization of itlinq: Channel allocation for device-to-device
  communication networks,'' \emph{IEEE Transactions on Wireless
  Communications}, vol.~15, no.~5, pp. 3603--3615, May 2016.

\bibitem{marco_interference_aware}
F.~J. Martin-Vega, G.~Gomez, M.~C. Aguayo-Torres, and M.~D. Renzo, ``Analytical
  modeling of interference aware power control for the uplink of heterogeneous
  cellular networks,'' \emph{IEEE Transactions on Wireless Communications},
  vol.~15, no.~10, pp. 6742--6757, Oct 2016.

\bibitem{marco_muting}
\BIBentryALTinterwordspacing
F.~J. Martin{-}Vega, M.~C. Aguayo{-}Torres, G.~G{\'{o}}mez, and M.~D. Renzo,
  ``On muting mobile terminals for uplink interference mitigation in hetnets -
  system-level analysis via stochastic geometry,'' \emph{CoRR}, vol.
  abs/1702.03726, 2017. [Online]. Available:
  \url{http://arxiv.org/abs/1702.03726}
\BIBentrySTDinterwordspacing

\bibitem{coordinated_joint_trans}
G.~Nigam, P.~Minero, and M.~Haenggi, ``Coordinated multipoint joint
  transmission in heterogeneous networks,'' \emph{Communications, IEEE
  Transactions on}, vol.~62, no.~11, pp. 4134--4146, 2014.

\bibitem{spatiotemporal_cooperation}
------, ``Spatiotemporal cooperation in heterogeneous cellular networks,''
  \emph{IEEE Journal on Selected Areas in Communications}, vol.~33, no.~6, pp.
  1253--1265, June 2015.

\bibitem{simulation_pc1}
A.~M. Rao, ``Reverse link power control for managing inter-cell interference in
  orthogonal multiple access systems,'' in \emph{Vehicular Technology
  Conference, 2007. VTC-2007 Fall. 2007 IEEE 66th}.\hskip 1em plus 0.5em minus
  0.4em\relax IEEE, 2007, pp. 1837--1841.

\bibitem{simulation_pc2}
M.~Boussif, N.~Quintero, F.~D. Calabrese, C.~Rosa, and J.~Wigard,
  ``Interference based power control performance in lte uplink,'' in
  \emph{Wireless Communication Systems. 2008. ISWCS'08. IEEE International
  Symposium on}.\hskip 1em plus 0.5em minus 0.4em\relax IEEE, 2008, pp.
  698--702.

\bibitem{marco_UL}
M.~D. Renzo and P.~Guan, ``Stochastic geometry modeling and system-level
  analysis of uplink heterogeneous cellular networks with multi-antenna base
  stations,'' \emph{IEEE Transactions on Communications}, vol.~64, no.~6, pp.
  2453--2476, June 2016.

\bibitem{dude_mudasar}
M.~Bacha, Y.~Wu, and B.~Clerckx, ``Downlink and uplink decoupling in two-tier
  heterogeneous networks with multi- antenna base stations,'' \emph{IEEE
  Transactions on Wireless Communications}, vol.~16, no.~5, pp. 2760--2775,
  2017.

\bibitem{Geff_DL}
J.~G. Andrews, F.~Baccelli, and R.~K. Ganti, ``A tractable approach to coverage
  and rate in cellular networks,'' \emph{Communications, IEEE Transactions on},
  vol.~59, no.~11, pp. 3122--3134, 2011.

\bibitem{Geff_K_tier_DL}
H.~S. Dhillon, R.~K. Ganti, F.~Baccelli, and J.~G. Andrews, ``Modeling and
  analysis of k-tier downlink heterogeneous cellular networks,'' \emph{IEEE
  Journal on Selected Areas in Communications}, vol.~30, no.~3, pp. 550--560,
  April 2012.

\bibitem{MGF_marco}
M.~Di~Renzo, A.~Guidotti, and G.~E. Corazza, ``Average rate of downlink
  heterogeneous cellular networks over generalized fading channels: A
  stochastic geometry approach,'' \emph{IEEE Transactions on Communications},
  vol.~61, no.~7, pp. 3050--3071, 2013.

\bibitem{martin_RatelessCodes}
A.~Rajanna and M.~Haenggi, ``Enhanced cellular coverage and throughput using
  rateless codes,'' \emph{IEEE Transactions on Communications}, vol.~65, no.~5,
  pp. 1899--1912, May 2017.

\bibitem{SG_TWC_martin}
C.~h.~Lee and M.~Haenggi, ``Interference and outage in poisson cognitive
  networks,'' \emph{IEEE Transactions on Wireless Communications}, vol.~11,
  no.~4, pp. 1392--1401, April 2012.

\bibitem{marco_milimeter}
M.~Di~Renzo, ``Stochastic geometry modeling and analysis of multi-tier
  millimeter wave cellular networks,'' \emph{IEEE Transactions on Wireless
  Communications}, vol.~14, no.~9, pp. 5038--5057, 2015.

\bibitem{heath_millimeter_wave}
T.~Bai and R.~W. Heath, ``Coverage and rate analysis for millimeter-wave
  cellular networks,'' \emph{IEEE Transactions on Wireless Communications},
  vol.~14, no.~2, pp. 1100--1114, 2015.

\bibitem{geff_adhoc}
A.~M. Hunter, J.~G. Andrews, and S.~Weber, ``Transmission capacity of ad hoc
  networks with spatial diversity,'' \emph{Wireless Communications, IEEE
  Transactions on}, vol.~7, no.~12, pp. 5058--5071, 2008.

\bibitem{marco_wpt}
M.~Di~Renzo and W.~Lu, ``System-level analysis and optimization of cellular
  networks with simultaneous wireless information and power transfer:
  Stochastic geometry modeling,'' \emph{IEEE Transactions on Vehicular
  Technology}, vol.~66, no.~3, pp. 2251--2275, 2017.

\bibitem{marco_EE}
M.~D. Renzo, A.~Zappone, T.~T. Lam, and M.~Debbah, ``System-level modeling and
  optimization of the energy efficiency in cellular networks-a stochastic
  geometry framework,'' \emph{IEEE Transactions on Wireless Communications},
  vol.~17, no.~4, pp. 2539--2556, April 2018.

\bibitem{backscatter_mudasar}
M.~Bacha and B.~Clerckx, ``Backscatter communications for the internet of
  things: A stochastic geometry approach,'' \emph{arXiv preprint
  arXiv:1711.07277}, 2018.

\bibitem{MDR3}
P.~Calka, ``Precise formulae for the distributions of the principal geometric
  characteristics of the typical cells of a two-dimensional poisson-voronoi
  tessellation and a poisson line process,'' \emph{Advances in Applied
  Probability}, vol.~35, no.~3, p. 551–562, 2003.

\bibitem{MDR4}
B.~Yu, S.~Mukherjee, H.~Ishii, and L.~Yang, ``Dynamic tdd support in the lte-b
  enhanced local area architecture,'' in \emph{2012 IEEE Globecom Workshops},
  Dec 2012, pp. 585--591.

\bibitem{distance_distribution}
D.~Moltchanov, ``Distance distributions in random networks,'' \emph{Ad Hoc
  Networks}, vol.~10, no.~6, pp. 1146--1166, 2012.

\bibitem{MDR6}
\BIBentryALTinterwordspacing
M.~van Lieshout, ``An introduction to planar random tessellation models,''
  \emph{Spatial Statistics}, vol.~1, pp. 40 -- 49, 2012. [Online]. Available:
  \url{http://www.sciencedirect.com/science/article/pii/S2211675312000036}
\BIBentrySTDinterwordspacing

\bibitem{book_integral}
D.~Zwillinger, \emph{Table of integrals, series, and products}.\hskip 1em plus
  0.5em minus 0.4em\relax Elsevier, 2014.

\end{thebibliography}


%
%
%
%
%
\end{document}